\documentclass[11pt]{article}

\usepackage[margin=1in]{geometry}

\usepackage{authblk}
\usepackage{hyperref}
\usepackage{url}
\usepackage{natbib}

\usepackage[utf8]{inputenc} 
\usepackage[T1]{fontenc}    
\usepackage{hyperref}       
\usepackage{url}            
\usepackage{booktabs}       
\usepackage{amsfonts}       
\usepackage{nicefrac}       
\usepackage{microtype}     
\usepackage{xcolor}      
\usepackage{graphicx}
\usepackage{multirow}
\usepackage{amsmath,amsfonts,amsmath,amssymb,amsthm}
\usepackage{makecell}
\newtheorem{theorem}{Theorem}
\newtheorem{lemma}{Lemma}
\newtheorem{proposition}{Proposition}
\newtheorem{remark}{Remark}
\newtheorem{corollary}{Corollary}
\newtheorem{definition}{Definition}

\usepackage{algorithm}
\usepackage{algpseudocode}
\usepackage{xspace}

\def\IsDraft{} 

\ifdefined\IsDraft
    \newcommand{\authnote}[2]{{\bf [{\color{red} #1's Note:} {\color{blue} #2}]}}
\else
    \newcommand{\authnote}[2]{}
\fi

\newcommand{\eps}{\varepsilon}
\newcommand{\calF}{\mathcal{F}}
\newcommand{\calX}{\mathcal{X}}

\newcommand{\arcmark}{\textsc{ArcMark}\xspace}
\newcommand{\simplemark}{\textsc{SimpleMark}\xspace}

\title{\textsc{SimpleMark}: Fast Multi-Bit Text Watermarking under $f$-Divergence Constraints}

\author[1]{Benjamin D. Kim}
\affil[1] {Massachusetts Institute of Technology}

\author[2]{Wanrong Zhang}
\author[3]{Weitong Ruan}
\affil[2,3] {Amazon}

\author[4]{Lav R. Varshney}
\affil[4] {Stony Brook University}

\author[5]{Daniel Alabi}
\affil[5] {University of Illinois Urbana-Champaign}

\date{}

\begin{document}

\maketitle

\begin{abstract}

We introduce a framework for multi-bit text watermarking with security defined directly through $f$-divergence from the base language model distribution. Unlike prior approaches that focus on average-key distortion-freeness or a particular statistical distance, our formulation supports general $f$-divergences, including total variation and KL divergence, and enforces the guarantee for each realized key and embedded message. We develop a coding-based watermarking scheme that optimally biases next-token distributions subject to a prescribed divergence budget, and characterize the resulting tradeoff between embedding rate, decoding reliability, and statistical security. Experimentally, we compare our method against prior multi-bit watermarking schemes across modern language models and payload regimes. Our approach achieves substantially lower watermark detectability while maintaining competitive message-recovery performance and generation quality. Our results provide a unified view of secure multi-bit watermarking and recover several commonly used security notions as special cases.

\end{abstract}

\section{Introduction}
\label{sec:intro}

Large language models (LLMs) can now generate fluent, high-quality text at scale.
 This creates new challenges for provenance, accountability, and policy enforcement~\citep{Longpre24}.
Text watermarking offers a promising response: a model provider can embed a hidden signal into generated
text so that the output can later be detected or decoded~\citep{KirchenbauerGWK23}.
While early watermarking schemes primarily focus on \textit{zero-bit}
detection (to ascertain the presence of a watermark)~\citep{ChristGZ23,kuditipudi2024robust,KirchenbauerGWK23}, many
applications require \textit{multi-bit} watermarking, where the generated text carries a message such as a user
identifier, model identifier, timestamp, or policy tag~\citep{CMBFK07}.

Multi-bit watermarking introduces a fundamental tension: increasing the payload generally requires a stronger
perturbation of the model's next-token distribution, making the watermark both easier to recover and easier to detect by an adversary~\citep{pmlr-v267-jiang25j}. 
Conversely, imposing stronger distributional constraints on the watermarked model, relative to the unwatermarked model,
makes the watermark less detectable but limits the amount of information
that can be reliably embedded. This raises a basic question: \textit{what are the capacity and detectability of a multi-bit text watermark while being as close as possible to the unwatermarked distribution?} 
We study this question by comparing how far from the unwatermarked distributions different watermarks perturb down to individual token distributions and fundamentally how much information one can embed. 
To measure distance from the base model, we adopt
per-key rather than average-key notions \citep{gilani2026arcmarkmultibitllmwatermark}. This is because
in adversarial settings~\citep{chang-etal-2025-watermark,zhang2024watermarks,AKV03}, security should hold for the actual distribution produced in a realized execution, not after averaging over secret keys~\citep{KL14}. An average-key guarantee can mask large key-dependent deviations that give a distinguisher substantial advantage, motivating a stronger per-key notion of statistical indistinguishability~\citep{CMBFK07}.

Motivated by this observation, we introduce \simplemark, a multi-bit text watermarking scheme that enforces
an $f$-divergence $\varepsilon$-budget at each generation step. 
\simplemark uses low density parity check (LDPC) codes with a 
geometry-aware (simplex) optimal transport embedding rule and maximum-likelihood decoding. The simplex geometry allows the reduction of the per-token embedding step to a two-scalar tilt computable in $O(1)$, while the LDPC coding allows for improved, coding-theoretic reliability in recovery. 
Figure~\ref{fig:diagram} graphically illustrates \simplemark.
In Section~\ref{sec:main-algorithms}, we discuss the components of \simplemark in more detail.

\textbf{Our contributions:}
First, we formalize multi-bit text watermarking as communication through a constrained watermark channel and
identify the mutual information governing reliable message recovery. Second, we show a channel-collapse result:
exact per-key distortion-freeness forces zero embedding capacity (i.e., no scheme can embed even a single bit)
so distortion is not a defect of existing multi-bit watermarks but the
price of embedding, and the right goal is to control it explicitly, through $f$-divergence constraints yielding a tunable capacity--detectability tradeoff. Third, we demonstrate that average-key distortion-freeness and other such metrics do not imply per-key undetectability and can 
lead to highly detectable realized-key behavior. Finally, we evaluate \simplemark~on modern language models showing it achieves competitive or state-of-the-art performance on reasoning and advanced tasks, recovery, perplexity, and more while substantially reducing the distributional
deviation between the
watermarked model and the base model.

\begin{figure}
    \centering
    \includegraphics[width=0.9\linewidth]{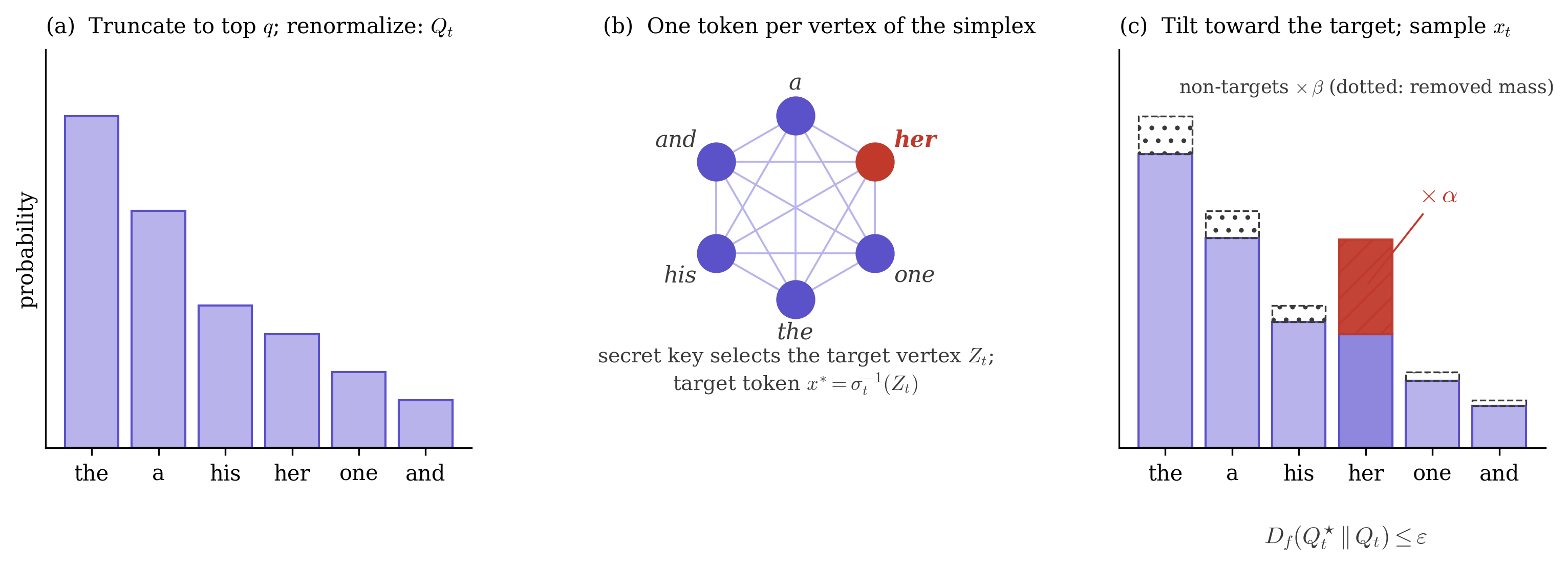}
    \caption{Watermark encoding and embedding process for \simplemark: the watermarker starts by truncating the language model support according to the top-$q$, and places all the tokens on a $q-$ary simplex. Using the message $M$ and the generator matrix $G$ to create a codeword $C$, and computes the target vertex on the simplex, depicted in red in the diagram. The watermarker biases the watermarked distribution towards this token by $\alpha$, and reduces the probability of sampling other tokens by $\beta$ while remaining in the $\varepsilon$ parameter $f$-divergence budget, $D_f (Q_t^* \Vert Q_t) \le \varepsilon$, and samples from the watermarked distribution $Q_t^*$. The message can then be decoded with one of the decoders. A detailed procedure is in Section~\ref{sec:main-algorithms}.}
    \label{fig:diagram}
\end{figure}

\subsection{Related Work}
\label{sec:relatedwork}

\paragraph{Multi-bit Watermarks} 
Recently, a variety of multi-bit watermarks have been developed for language models~\citep{yoo-etal-2024-advancing, pmlr-v267-jiang25j,qu2025provably,xu2025robust, jiang2026mirrormark, cui2026mc}. Two prominent ones are
BiMark \citep{feng2025bimark} and ArcMark \citep{gilani2026arcmarkmultibitllmwatermark}. A pivotal advancement is the development of provably robust frameworks; \citep{qu2025provably} use error-correction codes (ECC) to provide formal guarantees within specific edit-distance bounds, whereas \citep{cohen2025watermarking} establishes resistance against adaptive prompting. Recent work such as \citep{xu2026xmark} further optimizes this tradeoff by employing block-wise encoding to ensure high recovery rates even in short-form textual outputs, while \citep{guo2026feedback} incorporates feedback coding. To the best of our knowledge, these previous works
do not provide an explicit per-key mechanism for directly tuning a prescribed divergence budget (e.g., for total variation or KL) at each generation step.

\begin{table}[t]
\centering
\footnotesize
\setlength{\tabcolsep}{3.5pt}
\renewcommand{\arraystretch}{1.15}
\begin{tabular}{@{}lcc@{}}
\toprule
Watermark & \makecell{Per-token watermark-\\specific overhead} & \makecell{Decoding cost} \\
\midrule
ArcMark      & $O(I\,|\mathcal{V}|\,r)$
             & $O\!\left(n(|\mathcal{V}|+2^{b})\right)$ \\
BiMark       & $O(d\,|\mathcal{V}|)$
             & $O(nd+b)$ \\
StealthInk   & $O(|\mathcal{V}|)$
             & $O\!\left(n|\mathcal{V}|+(b/c)2^{c}\right)$ \\
MC$^2$Mark   & $O\!\left(m|\mathcal{V}|+m\tbinom{n'}{n'/2}n'\right)$
             & $O(n\,m\,n'+b)$ \\
QuantileMark & $O(|\mathcal{V}|\log|\mathcal{V}|+M)$
             & $O\!\left(n(|\mathcal{V}|\log|\mathcal{V}|+M)+(b/m')M\right)^{\dagger}$ \\
MirrorMark   & $O(L\,|\mathcal{V}|)$
             & $O\!\left(n(b/c+2^{c}L)\right)$ \\
\midrule
\textsc{SimpleMark} & $O_{q}(1)$
             & $O\!\left(n|\mathcal{V}|+2^{b}n\right)^{\dagger}$ \\
\bottomrule
\end{tabular}
\caption{Complexity comparison between \simplemark and other known multi-bit watermark schemes. $^{\dagger}$Note the decoder requires a model replay. We assume $q = 53$ as a small constant. With respect to the complexities, the symbols can be defined as $|\mathcal{V}|$: vocabulary; $n$: text length; $b$: payload bits. Method-specific symbols: $r$: side-information resolution and $I$: Sinkhorn iteration count
for ArcMark \citep{gilani2026arcmarkmultibitllmwatermark}; $d$: reweighting
layers for BiMark \citep{feng2025bimark} ($d{=}10$); $m$: channel layers and
$n'$: segment length in bits for MC$^2$Mark \citep{cui2026mc} ($m{=}10$;
$n'$ unspecified in their experiments); $c$: bits per message chunk for
StealthInk \citep{pmlr-v267-jiang25j} and MirrorMark
\citep{jiang2026mirrormark} ($c{=}1$ and $c{=}2$ respectively; both papers
denote this $m$); $L$: tournament layers for MirrorMark ($L{=}30$). $M$: quantile bins per step for QuantileMark \citep{zhu2026quantilemark} ($M = 2^{m'}$ with $m'$ bits
per message symbol).}
\label{tab:complexity}
\end{table}

\paragraph{Watermarking Capacity}

Our work connects to the information-theoretic literature of watermarking, data hiding,
and steganography, where watermarking is modeled as communication over a constrained channel:
the encoder embeds a message into a host signal subject to a distortion or detectability constraint,
and the decoder attempts to recover it~\citep{CACHIN200441,CMBFK07}. Early work characterized achievable rates for
information hiding under distortion and attack constraints~\citep{MoulinO03,Moulin03,Moulin05,MoulinK2005},
and coding-based schemes such as quantization-index modulation showed how structured codes can
approach these limits~\citep{CW06}.

In the language model setting, the host distribution is the base model's next-token distribution, so the
relevant constraint is distributional rather than perceptual.  Recent work has begun to analyze LLM watermarking through this lens~\citep{gilani2026arcmarkmultibitllmwatermark, he2025theoretically}.  We further extend this capacity viewpoint
of multi-bit text watermarking by comparing average-key
distortion-freeness, uniform per-key total-variation closeness,
and perfect distortion-freeness.  Our results show that
stronger per-key distortion-freeness constraints reduce the mutual information of the induced watermark
channel, and exact per-key distortion-freeness forces zero embedding capacity.

\section{Preliminaries and Notation}
\label{sec:prelims}

First, we introduce some useful notation and background, many of which can be found in standard textbooks~\citep{ross98, gut2013probability, MitzenmacherUp05,keener2010theoretical}.

\paragraph{$f$-divergences.}
Let $f : (0, \infty) \to \mathbb R$ be convex with $f(1) = 0$. For
distributions $P, Q$ on $\calX$ with $P \ll Q$, the $f$-divergence
of $P$ from $Q$ is
$D_f(P \Vert Q) := \sum_{x \in \calX} Q(x)\, f\big(\tfrac{P(x)}{Q(x)}\big)$.
Every $f$-divergence satisfies the data-processing inequality:
$D_f(\kappa(P) \Vert \kappa(Q)) \le D_f(P \Vert Q)$ for any (possibly
randomized) map $\kappa$ \citep{polyanskiy2025information}. We next discuss the
two $f$-divergences this work focuses on.

\paragraph{Total variation distance.}
For probability measures $P,Q$ on measurable space $(\calX, \calF)$,
we define
$d_{\mathrm{TV}}(P,Q)
:=
\sup_{A \in \mathcal F} |P(A)-Q(A)|
=
\frac12 \sum_{x\in\mathcal X} |P(x)-Q(x)|.$
Equivalently, total variation distance characterizes the maximum distinguishing
advantage of any adversarial attacker trying to decide whether a sample was drawn from
$P$ or $Q$ \citep{KL14}. In particular, for any possibly randomized distinguisher
$\mathcal D : \mathcal X \to \{0,1\}$,
\[
\sup_{\mathcal D}
\left|
\Pr_{X\sim P}[\mathcal D(X)=1]
-
\Pr_{X\sim Q}[\mathcal D(X)=1]
\right|
=
d_{\mathrm{TV}}(P,Q).
\]
Relating to this work, we want the TV distance of the watermarked and unwatermarked distributions to be as small as possible, so an adversary has the smallest possible advantage in determining whether outputted text is watermarked or not.

\paragraph{Kullback--Leibler divergence.}
For distributions $P, Q$ on a finite alphabet $\calX$ with
$P \ll Q$, the Kullback--Leibler divergence of $P$ from $Q$ is
$D_{KL}(P \Vert Q) := \sum_{x \in \calX} P(x) \log \tfrac{P(x)}{Q(x)}$.
In our setting, $KL(P \Vert Q)$ is the expected excess log-loss
incurred when text generated by the watermarked distribution $P$
is scored by the base model $Q$, and thus relates directly to
perplexity: writing $H(P)$ for the entropy of $P$,
$\mathrm{PPL} \;=\; \exp\!\big(\mathbb E_{x \sim P}[-\log Q(x)]\big)
  \;=\; e^{H(P)}\, e^{KL(P \Vert Q)}.$
Here, $e^{KL(P \Vert Q)}$ is related to the
multiplicative perplexity inflation introduced by watermarking.

\paragraph{Watermarked distribution.}
Fix $\varepsilon \ge 0$. A watermarked distribution $Q_t^\star = P_{X_t\mid M=m,V_t=v}$ is said to be (uniformly per-key) \emph{$\varepsilon$-distortion-free} if for every time $t$, every message $m\in\mathcal{M}$, and every realized key value $v$,
$D_{f}
\bigl(
P_{X_t\mid M=m,V_t=v}
\,\Vert\,
Q_t
\bigr)
\le \varepsilon.$
Equivalently, when the encoder depends on $(m,v)$ only through the state $Z_t=z_t(m,v)$, the distribution is (uniformly per-key) \emph{$\varepsilon$-distortion-free} iff
$D_{f}
\bigl(
P_{X_t\mid Z_t=z}
\,\Vert\,
Q_t
\bigr)
\le \varepsilon
\qquad \text{for every realizable } z\in\mathcal{Z},\ \forall t.$

For our watermark, we can set $\eps$ to be a constant or to be cryptographically negligible (depending on the target security application). When $\eps = 0$, we obtain \emph{perfect} distortion-freeness.

\section{Main Results}
\label{sec:theory}

This section contains our main theoretical results. We begin by describing \simplemark's encoding, embedding, and decoding algorithms.

\subsection{Algorithms}
\label{sec:main-algorithms}

We describe \simplemark~being used in practice. Its components include:
\begin{itemize}
    \item \textbf{Encoder:} The message is coded with an LDPC code and then combined with the shared secret to compute the target vertex.
    \item \textbf{Embedding:} at each time $t$, the watermarked distribution $Q_t^\star$ is computed in closed form while adhering to the divergence budget. The watermarked token $X_t$ is sampled.
    \item \textbf{Decoding:} a decoder uses a likelihood decoder to compute $\hat M$. Alternatively, we provide a plurality decoder where only the received symbols are required.
\end{itemize}
Encoding and embedding are given in algorithm form in Appendix \ref{sec:encoder-decoder}.

\paragraph{Message Encoding} The watermarker begins with a $k$-symbol message $m$ and encodes it with a $q$-ary LDPC code, resulting in a length-$n$ codeword $C_1 \dots C_n$ over $\mathbb{F}_q$ for a prime $q$. For a $k \times n$ systematic generator matrix $G$ corresponding to a sparse parity-check matrix $H$, the resulting codeword is $C = m \cdot G$. The generator matrix can be considered public information.
At each step $t$, the shared secret $S_t = (V_t, \sigma_t)$ is derived by hashing the secret key with the $h$ previously generated tokens, so both parties recompute it from the text itself. It consists of a key value $V_t$ uniform on $\mathcal{R} = \{0, \dots, q-1\}$ and a keyed bijection $\sigma_t : \mathcal{S}_t \to \mathcal{R}$, where $\mathcal{S}_t \subset \mathcal{V}$ is the top-$q$ truncated support of the vocabulary $\mathcal{V}$, so $|\mathcal{S}_t| = q$ and each vertex receives exactly one token. We identify codeword symbols, key values, and colored tokens with the vertices of the $(q-1)$-simplex: the codeword symbol $C_t \in \mathbb{F}_q$ is mapped to vertex $C_t$, the key value $V_t$ to vertex $V_t$, and the token $x \in \mathcal{S}_t$ to vertex $\sigma_t(x)$.
The channel input vertex at time $t$ is then
\begin{equation}
    Z_t = \left( C_t + V_t \right) \bmod q.
    \label{eq:simplemark_channel_input}
\end{equation}
This vertex determines the target location on the simplex toward which the embedding step biases the next-token distribution, while the divergence constraint controls how far the resulting watermarked distribution may deviate from the base model distribution.

\paragraph{Message Embedding}
\simplemark~constructs a watermarked distribution $Q_t^\star$ by solving a constrained minimization problem which admits a closed-form solution.

For each token, let $\sigma_t(x)$ be the vertex assigned to $x$, where $\sigma_t$ is the random bijection shared between the watermarker and detector at time $t$, and let $Z_t$ be the target
vertex defined in \eqref{eq:simplemark_channel_input}. We define the embedding cost
$d_t(x) = \mathbf{1}\!\left[\sigma_t(x) \neq Z_t\right],$
which is the discrete metric on the vertex set, since all pairs of vertices are equidistant. The
watermarked distribution is then chosen as
\begin{equation}
    Q_t^\star
    \in
    \arg\min_{P \in \Delta(\mathcal S_t)}
    \sum_{x \in \mathcal S_t} P(x) d_t(x)
    \quad
    \text{s.t.}
    \quad
    D(P,Q_t) \le \varepsilon ,
    \label{eq:simplemark_embedding}
\end{equation}
where $\varepsilon$ is the selected divergence budget and $D$ is a selectable $f$-divergence, taken here to be either $d_{TV}$ or $D_{KL}$. Equivalently,
the embedding step moves as much probability mass as permitted by the
constraint onto the token coloured $Z_t$, while
keeping the watermarked distribution within distance $\varepsilon$ of the base
model distribution.

Because the cost is an indicator, the objective in \eqref{eq:simplemark_embedding} depends on $P$ only through the mass it places on the target vertex. Writing $B = \sum_{x : \sigma_t(x) = Z_t} Q_t(x)$ and $s = \sum_{x : \sigma_t(x) = Z_t} P(x)$, the data-processing inequality gives $D(P,Q_t) \ge D\!\left(\mathrm{Bern}(s),\mathrm{Bern}(B)\right)$.
Equality is attained by preserving the conditional distributions within the target and non-target bins. The optimum of \eqref{eq:simplemark_embedding} is therefore the two-scalar tilt
\begin{equation}
    Q_t^\star(x)
    =
    \begin{cases}
        \alpha_t\, Q_t(x), & \sigma_t(x) = Z_t, \\
        \beta_t\, Q_t(x),  & \sigma_t(x) \neq Z_t,
    \end{cases}
    \qquad
    \alpha_t = \frac{s^\star}{B},
    \quad
    \beta_t = \frac{1-s^\star}{1-B},
    \label{eq:simplemark_tilt}
\end{equation}
where $s^\star = B + \min(\varepsilon, 1-B)$ when $D = d_{TV}$, and $s^\star$ is the unique root in $[B,1]$ of $s\log\frac{s}{B} + (1-s)\log\frac{1-s}{1-B} = \varepsilon$ when $D = D_{KL}$.\footnote{For $\varepsilon \leq \log(1/B)$. For $\varepsilon>\log(1/B)$,
the constrained optimum is simply $s^* = 1$.} In certain settings, it is beneficial to limit the tilt towards unwanted tokens by capping $\alpha$.

Finally, the next token is sampled according to
$X_t \sim Q_t^\star(\cdot \mid x_{<t}).$
The generated token is appended to the context, and the procedure is repeated
for $t=1,\ldots,n$.

\paragraph{Message Decoding}
\simplemark~is decoded with a likelihood-based decoder that uses the
watermarked channel law.

Given the prompt and received token sequence $x_{1:n}$, the decoder replays the
language model to recover the base next-token distributions
$Q_t(\cdot)=Q(\cdot \mid x_{<t})$. It also reconstructs the shared key sequence
$S_t=(V_t,\sigma_t)$. For each possible code symbol
$a \in \{0,\ldots,q-1\}$, the decoder forms the corresponding target vertex
\(
    z_t(a)
    =
    \left(
        a
        +
        V_t
    \right)
    \bmod q .
\)
Using this vertex, the decoder evaluates the probability assigned to the
observed token $x_t$ by the divergence-constrained embedding distribution:
\[
    Q_{t,a}^\star(x)
    =
    \begin{cases}
        \dfrac{s^\star_a}{B_a}\, Q_t(x), & \sigma_t(x) = z_t(a), \\[6pt]
        \dfrac{1-s^\star_a}{1-B_a}\, Q_t(x),  & \sigma_t(x) \neq z_t(a),
    \end{cases}
    \qquad
    B_a = \!\!\sum_{x : \sigma_t(x) = z_t(a)}\!\! Q_t(x),
\]
with $s^\star_a$ obtained from $B_a$ and $\varepsilon$ as in the embedding step.
This defines the per-symbol log-likelihood
$L_t(a) = \log Q_{t,a}^\star(x_t).$

For a candidate message $\hat m$ with codeword
$C(m)=(C_1,\ldots,C_n)$, the decoder sums these likelihoods:
$\Lambda(\hat m) = \sum_{t=1}^n L_t(C_t( \hat m)).$
The decoded message is the maximum-likelihood codeword,
$\hat M = \arg\max_{\hat m} \Lambda( \hat m).$
Simply, the decoder selects the message whose codeword is most likely to explain the observed
token sequence under the same divergence-constrained channel used for embedding.

In Appendices~\ref{sec:encoder-decoder} and~\ref{sec:capacity}, we derive more extensive theoretical results about our watermarking scheme, 
the watermark channel family, encoding and decoding capabilities, embedding capacity, and perplexity. We discuss a few of
those results here. We acknowledge the simplicity of our results; deriving rate under a non-memoryless conditional is much more complicated and remains interesting future work. Most of our theoretical results are applicable to all $f$-divergences compatible with \simplemark.

\subsection{Perfect Distortion-Freeness}

We show that for perfect distortion freeness, that is when the watermarked distribution is equivalent to that of the unwatermarked, the watermark's capacity is zero. This motivates having an $\varepsilon$-budget that is a measurable $f$-divergence distance from the unwatermarked distribution.

\begin{proposition}[Rate vanishes as $\varepsilon\to 0$]
\label{prop:rate-vanishes}
Let $\mathcal{X}$ and $\mathcal{Z}$ be finite alphabets, let $Q\in\Delta(\mathcal{X})$ be a fixed base distribution, and let $P_Z$ be a fixed distribution on $\mathcal{Z}$. For $\varepsilon\ge 0$, define the feasible set
\[
\mathcal{F}(\varepsilon)
:=
\left\{
\bigl(P_{X\mid Z=z}\bigr)_{z\in\mathcal{Z}}
:
d_{\mathrm{TV}}(P_{X\mid Z=z},Q)\le \varepsilon
\ \text{for all } z\in\mathcal{Z}
\right\}.
\]
Define the constrained capacity
$C(\varepsilon)
:=
\sup_{\bigl(P_{X\mid Z=z}\bigr)\in\mathcal{F}(\varepsilon)}
I(Z;X),$
where the mutual information is computed under the joint law
$P_{Z,X}(z,x)=P_Z(z)P_{X\mid Z=z}(x).$
Then $\lim_{\varepsilon\to 0} C(\varepsilon)=0.$
In particular, any achievable embedding rate under the uniform per-key TV constraint must approach $0$ as $\varepsilon\to 0$.
\end{proposition}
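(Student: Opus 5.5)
The plan is to bound $I(Z;X)$ uniformly over the feasible set $\mathcal{F}(\varepsilon)$ by a quantity that tends to zero with $\varepsilon$, using only the fact that every conditional law is TV-close to the common reference $Q$. The starting point is the standard variational identity: for any distribution $R$ on $\mathcal{X}$,
\[
I(Z;X) = \sum_{z} P_Z(z)\, D_{KL}\bigl(P_{X\mid Z=z}\,\Vert\, P_X\bigr) \le \sum_z P_Z(z)\, D_{KL}\bigl(P_{X\mid Z=z}\,\Vert\, R\bigr).
\]
Taking $R=Q$ directly would need a reverse-Pinsker bound, which depends on $\min_x Q(x)$ and breaks down if $Q$ has zero entries. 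I will therefore work with entropies instead.

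\textbf{Step 1 (entropy bound).} Write $I(Z;X)=H(X)-H(X\mid Z)=\sum_z P_Z(z)\bigl[H(P_X)-H(P_{X\mid Z=z})\bigr]$. The mixture satisfies
\[
d_{\mathrm{TV}}(P_X,Q)\le \sum_z P_Z(z)\, d_{\mathrm{TV}}(P_{X\mid Z=z},Q)\le\varepsilon
\]
by convexity of TV. By the triangle inequality, $d_{\mathrm{TV}}(P_X,P_{X\mid Z=z})\le 2\varepsilon$ for each $z$.

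\textbf{Step 2 (continuity of entropy).} I will invoke the Fannes--Audenaert (or Zhang) continuity bound on the finite alphabet $\mathcal{X}$: for $d_{\mathrm{TV}}(P,P')\le\delta\le 1-1/|\mathcal{X}|$,
\[
|H(P)-H(P')|\le \delta\log(|\mathcal{X}|-1)+h_b(\delta),
\]
where $h_b$ is the binary entropy. Applying it with $\delta=2\varepsilon$ gives
\[
C(\varepsilon)\le 2\varepsilon\log(|\mathcal{X}|-1)+h_b(2\varepsilon)
\]
for all $\varepsilon$ small enough. The right-hand side does not depend on the particular feasible family, so it bounds the supremum.

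\textbf{Step 3 (conclusion).} Since $h_b(\delta)\to 0$ as $\delta\to0$, and $C(\varepsilon)\ge 0$, we get $\lim_{\varepsilon\to0}C(\varepsilon)=0$. The statement about embedding rate then follows from the converse of the channel coding theorem: any reliable rate is at most the per-letter mutual information, hence at most $C(\varepsilon)$ under the per-step constraint.

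The only delicate point is Step 2. I need a continuity bound that is uniform and does not depend on $Q$ having full support, which is exactly what the Fannes-type inequality provides; finiteness of $\mathcal{X}$ is essential here. If I want to avoid citing it, I would argue by compactness instead. The map $(P_{X\mid Z=z})_z\mapsto I(Z;X)$ is continuous on the compact product of simplices, and $\mathcal{F}(\varepsilon)$ decreases to $\mathcal{F}(0)=\{P_{X\mid Z=z}=Q\ \forall z\}$, on which $I=0$. Upper semicontinuity of the supremum over a decreasing family of compact sets then gives the limit. At the endpoint $\varepsilon=0$ this yields $C(0)=0$ exactly.
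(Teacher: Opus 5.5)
Your proof is correct, but your main argument takes a different route from the paper's. The paper's proof is the qualitative one you sketch as a fallback. Along $\varepsilon_n\downarrow 0$ it picks near-optimal feasible families and extracts a convergent subsequence in the compact product of simplices. The limit must satisfy $P^\star_{X\mid Z=z}=Q$ for every $z$, and continuity of mutual information finishes the argument.

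Your primary argument is instead quantitative. Convexity of TV plus the triangle inequality give $d_{\mathrm{TV}}(P_X,P_{X\mid Z=z})\le 2\varepsilon$. The Fannes--Audenaert/Zhang bound then yields the explicit modulus $C(\varepsilon)\le 2\varepsilon\log(|\mathcal X|-1)+h_b(2\varepsilon)=O\bigl(\varepsilon\log(|\mathcal X|/\varepsilon)\bigr)$. You correctly apply it only for $2\varepsilon\le 1-1/|\mathcal X|$, where its right-hand side is monotone; assume $|\mathcal X|\ge 2$, since the other case is trivial.

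Your route buys a rate and uniformity. The bound depends on neither $Q$ nor $P_Z$, so it also controls the universal capacity $\widetilde C_t(\varepsilon)$ of Definition~\ref{def:universal-Ct} uniformly over the time-varying base laws $Q_t$. That is exactly what the ``in particular'' clause needs when routed through Theorem~\ref{thm:correct-converse}, whose bound is $\limsup_n\frac1n\sum_t\widetilde C_t(\varepsilon)$.

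The paper's compactness argument, as written, fixes $Q$ and $P_Z$. It could be extended by also ranging over the compact sets $\Delta(\mathcal X)$ and $\Delta(\mathcal Z)$, but it never gives a rate. In exchange, it needs nothing beyond continuity of mutual information and the fact that the constraint forces $P_{X\mid Z=z}=Q$ at $\varepsilon=0$. It therefore carries over essentially unchanged to other $f$-divergences such as KL, using lower semicontinuity of $D_f$. Your route would instead pass through a Pinsker-type comparison back to TV.
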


\begin{proof}
We first show $C(0)=0$. If
$\bigl(P_{X\mid Z=z}\bigr)_{z\in\mathcal{Z}}\in\mathcal{F}(0),$
then for every $z\in\mathcal{Z}$,
$d_{\mathrm{TV}}(P_{X\mid Z=z},Q)=0.$
Since total variation distance vanishes if and only if the two distributions are equal~\citep{keener2010theoretical}, it follows that
$P_{X\mid Z=z}=Q,$ for all $ z\in\mathcal{Z}.$
Hence the conditional law of $X$ does not depend on $Z$, so $X$ and $Z$ are independent. Therefore
$I(Z;X)=0.$
Taking the supremum over all feasible channels shows that $C(0)=0.$

We now prove that $C(\varepsilon)\to 0$ as $\varepsilon\to 0$. Let $(\varepsilon_n)_{n\ge 1}$ be any sequence with $\varepsilon_n\downarrow 0$. For each $n$, choose a feasible channel family
$\bigl(P^{(n)}_{X\mid Z=z}\bigr)_{z\in\mathcal{Z}}\in\mathcal{F}(\varepsilon_n)$
such that
$I_n := I_{P^{(n)}}(Z;X)\ge C(\varepsilon_n)-\frac1n.$
Because $\mathcal{X}$ and $\mathcal{Z}$ are finite, the product of simplices
$\prod_{z\in\mathcal{Z}} \Delta(\mathcal{X})$
is compact~\citep{Munkers84}. Hence, there exists~\citep{rudin1964principles} a subsequence, which we still index by $n$, such that for every $z\in\mathcal{Z}$,
$P^{(n)}_{X\mid Z=z}\to P^\star_{X\mid Z=z}$
for some limiting channel family $\bigl(P^\star_{X\mid Z=z}\bigr)_{z\in\mathcal{Z}}$.

Since each $P^{(n)}$ is feasible,
$d_{\mathrm{TV}}\!\left(P^{(n)}_{X\mid Z=z},Q\right)\le \varepsilon_n
$ for all $z.$
Letting $n\to\infty$ gives
$d_{\mathrm{TV}}\!\left(P^\star_{X\mid Z=z},Q\right)=0
$ for all $ z,$
and therefore
$P^\star_{X\mid Z=z}=Q $ for all $z\in\mathcal{Z}.$
So under the limiting channel, $X$ is independent of $Z$, and thus
$I_{P^\star}(Z;X)=0.$

Because mutual information is continuous in the joint distribution on finite alphabets~\citep{polyanskiy2025information, CoverT2006}, we have
$I_n = I_{P^{(n)}}(Z;X)\to I_{P^\star}(Z;X)=0.$
But
$C(\varepsilon_n)\le I_n+\frac1n,$
so taking limits yields
$\limsup_{n\to\infty} C(\varepsilon_n)\le 0.$
Since mutual information is always nonnegative,
$C(\varepsilon_n)\ge 0,$
hence
$\lim_{n\to\infty} C(\varepsilon_n)=0.$

Because this holds for every sequence $\varepsilon_n\downarrow 0$, it follows that
$\lim_{\varepsilon\to 0} C(\varepsilon)=0.$
This proves the claim.
\end{proof}

Ideally, a watermarked model would sample from the base distribution
itself, and each realized key would leave every next-token
distribution exactly untouched. Proposition~\ref{prop:rate-vanishes} shows
this ideal is unachievable in principle: under exact per-key
distortion-freeness the token is independent of the target vertex,
the channel collapses, and the embedding capacity is zero; that is,
no encoder or decoder can recover a single bit. Distortion is therefore
not a defect to be engineered away but the price of embedding, and we
make that price the design parameter: an explicit $\varepsilon$
divergence budget, with Theorem~\ref{thm:main-correct-converse} quantifying the
capacity it buys.

\subsection{Rate Bounds under Uniform Per-Key TV Constraints}

Subsequently, we are able to solve for several capacity results, which can be found in Appendix ~\ref{sec:capacity}. We again emphasize that our theoretical results assume a simpler setting, but nonetheless provide meaningful insight to our watermark's capacity. Let $\varepsilon$ denote a TV distance budget.
\begin{theorem}[Rate under uniform per-key TV constraint (see Theorem~\ref{thm:correct-converse})]
\label{thm:main-correct-converse}
Assume the conditional-memoryless model. Let a sequence of length-$n$ watermarking schemes satisfy the uniform per-key $\varepsilon$-TV constraint, and let the decoder error probability
\(
P_e^{(n)}:=\Pr[\widehat M\neq M]
\)
tend to zero as $n\to\infty$. If
\(
R_n:=\frac{k}{n}
\)
denotes the embedded rate in bits per token, then for every $n$,
\(
R_n
\le
\frac1n\sum_{t=1}^n \widetilde C_t(\varepsilon)
+
\frac{1+nR_n P_e^{(n)}}{n},
\)
 where $\widetilde C_t(\varepsilon)$ is the universal constrained channel capacity given in the Appendix as Definition~\ref{def:universal-Ct}. Consequently, if $(R_n)_n$ is bounded and $P_e^{(n)}\to 0$, then
\(
\limsup_{n\to\infty} R_n
\le
\limsup_{n\to\infty}\frac1n\sum_{t=1}^n \widetilde C_t(\varepsilon).
\)
\end{theorem}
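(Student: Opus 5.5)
This is a standard Fano-type weak converse; the only model-specific input is a per-step bound on mutual information. Take the message $M$ uniform on $2^{k}$ values, so $H(M)=k=nR_n$ bits, and let $X^n=(X_1,\dots,X_n)$ be the generated tokens.

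\emph{Step 1 (Fano).} Since $\widehat M$ is a function of $X^n$ together with the shared secret, Fano's inequality gives
\[
H(M\mid X^n)\le 1+P_e^{(n)}\log_2(2^{k})=1+nR_nP_e^{(n)} .
\]
Hence
\[
nR_n=H(M)=I(M;X^n)+H(M\mid X^n)\le I(M;X^n)+1+nR_nP_e^{(n)} .
\]
If the decoder also uses the key sequence, I would run the same argument with $I(M;X^n\mid V^n)$. Since the keys are independent of $M$, every subsequent step goes through with an additional conditioning on $V^n$.

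\emph{Step 2 (chain rule and memorylessness).} Expand
\[
I(M;X^n)=\sum_{t=1}^n I(M;X_t\mid X^{t-1}).
\]
Under the conditional-memoryless model, $X_t$ depends on $(M,X^{t-1})$ only through the state $Z_t=z_t(M,V_t)$, given the context. So $M\to (Z_t,X^{t-1})\to X_t$ is a Markov chain conditionally on $X^{t-1}$. Data processing then gives
\[
I(M;X_t\mid X^{t-1})\le I(Z_t;X_t\mid X^{t-1}).
\]

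\emph{Step 3 (per-step capacity bound).} Fix a realization $x^{t-1}$. The conditional channel $z\mapsto P_{X_t\mid Z_t=z,\,x^{t-1}}$ satisfies the uniform per-key constraint
\[
d_{\mathrm{TV}}\bigl(P_{X_t\mid Z_t=z,\,x^{t-1}},\,Q_t\bigr)\le\varepsilon \quad\text{for every realizable } z .
\]
It therefore lies in the feasible set $\mathcal F(\varepsilon)$ of Proposition~\ref{prop:rate-vanishes}, with base distribution $Q_t$. Its mutual information is at most the supremum over input laws and feasible channels. I take this to be the universal quantity $\widetilde C_t(\varepsilon)$ of Definition~\ref{def:universal-Ct}, presumably a supremum over $P_Z$, over feasible channels, and uniformly over contexts. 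Averaging over $x^{t-1}$ gives
\[
I(Z_t;X_t\mid X^{t-1})\le \widetilde C_t(\varepsilon).
\]
Summing over $t$ and dividing the inequality from Step 1 by $n$ yields the displayed bound on $R_n$.

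\emph{Step 4 (limit).} If $R_n\le \bar R$, then
\[
\frac{nR_nP_e^{(n)}}{n}\le \bar R\,P_e^{(n)}\to 0 \quad\text{and}\quad \frac1n\to 0 .
\]
Taking $\limsup$ on both sides gives the stated asymptotic bound.

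\emph{Main obstacle.} I expect Step 3 to be the hardest. The per-step capacity must dominate the conditional mutual information for every context. This requires that $\widetilde C_t(\varepsilon)$ be defined as a supremum over all admissible input laws $P_{Z_t\mid x^{t-1}}$, since the state distribution given the past may not be uniform, and over all possible base distributions $Q_t$ arising at step $t$. One must also check that the Markov structure of Step 2 really holds given the key-hashing of past tokens. I would first verify this directly from the conditional-memoryless assumption stated in the appendix.
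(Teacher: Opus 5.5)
Your overall route (Fano, chain rule, data processing through the state $Z_t$, then a per-step bound by $\widetilde C_t(\varepsilon)$) is exactly the paper's. However, Step~1 as you state it is wrong, and the key-conditioned variant you relegate to a side remark is not optional. Fano's inequality bounds the entropy of $M$ given whatever the decoder actually uses. Since $\widehat M$ depends on $(X_{1:n},V_{1:n})$ (you say so yourself, and it is assumption~4 of the model), what you get is $H(M\mid X_{1:n},V_{1:n})\le 1+nR_nP_e^{(n)}$, not a bound on $H(M\mid X^n)$.

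The unconditioned inequality is in fact false here. The state $Z_t=(C_M(t)+V_t)\bmod q$ is a one-time pad, with $V_t$ i.i.d.\ uniform and independent of $M$. Under the conditional-memoryless model, $P_{X_{1:n}\mid M=m}(x_{1:n})=\prod_t \frac1q\sum_{z}P_{X_t\mid Z_t=z}(x_t)$, which does not depend on $m$. Hence $I(M;X^n)=0$, and your main line would give $R_n(1-P_e^{(n)})\le 1/n$, i.e.\ zero rate for every reliable scheme.

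Once you work with the decoder's true observation, the argument is the paper's:
\begin{itemize}
\item $I(M;X_{1:n},V_{1:n})=I(M;X_{1:n}\mid V_{1:n})$ by independence of $M$ and $V_{1:n}$.
\item The chain rule gives terms $I(M;X_t\mid U_t)$ with $U_t=(X_{1:t-1},V_{1:n})$.
\item Assumption~3, $P_{X_t\mid M,V_{1:n},X_{1:t-1}}=P_{X_t\mid Z_t}$, gives the Markov chain $M\to Z_t\to X_t$ given $U_t$ directly. Data processing then yields $I(M;X_t\mid U_t)\le I(Z_t;X_t\mid U_t)$.
\item For each $u$, the channel is the fixed family in $\mathcal F_t(\varepsilon)$ with input law $P_{Z_t\mid U_t=u}$, so $I(Z_t;X_t\mid U_t=u)\le \widetilde C_t(\varepsilon)$.
\end{itemize}

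The concerns in your last paragraph do not arise in this model. $Q_t$ is a fixed law for each $t$, and the state-to-token channel does not depend on the context. So the only uniformity needed is over input laws, which Definition~\ref{def:universal-Ct} already builds in by taking the supremum over $P_{Z_t}\in\Delta(\mathcal Z)$.
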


\section{Experimental Results}
\label{sec:experiments}

In this section we demonstrate \simplemark's efficacy experimentally. We evaluate \simplemark's reasoning abilities, watermarked token distinguishing advantage, message and bit decoding accuracy, and perplexity. To baseline our performance, we compare our results to \arcmark \citep{gilani2026arcmarkmultibitllmwatermark} and BiMark \citep{feng2025bimark}. Both watermarks have publicly available implementations, are state of the art for message recovery, and have some definition of distortion freeness the watermark adheres to. We use 300 trials per point unless specified.

\subsection{Experiment Parameters}

\paragraph{Setup.}
We experiment with Qwen3.5-9B-Base \citep{qwen3.5} and Llama3.1-8B-Instruct \citep{grattafiori2024llama}. When applicable, we use the C4-RealNewslike dataset \citep{raffel2020exploring} and a temperature of 1.0. Each trial generates 300 tokens and is evaluated at the prefix of $50, 100, \dots, 300$ tokens. The reasoning experiments use the instruction-tuned Qwen3.5-9B in thinking mode with temperature 0.6. \simplemark uses a field size of $q=53$, and subsequently top-53 sampling. The other baselines use top-50 sampling. The $\varepsilon$ budget is a per-step TV or KL constraint. We report results for $\varepsilon \in \{0.01, 0.05, 0.1, 0.15\}$ and, where stated, with the per-token likelihood-ratio cap $\alpha_{\max} \in \{500, 1500, 2000\}$. BiMark~\citep{feng2025bimark} is run with $d = 10$ partition layers, $\tilde{\delta} = 1.0$, context width 2 and top-50 sampling; ArcMark~\citep{gilani2026arcmarkmultibitllmwatermark} with top-50 sampling, context width 2, $r = 64$ and follow its authors' setup. All schemes see the same prompts.
For the key derivation, the encoder and decoder share a fixed integer secret key. At each token position $t$, the side information is the SHA-256 hash of the secret key concatenated with the ids of the $h$ preceding generated tokens.

\subsection{Empirical Maximum Token Distinguishing Advantage}

We measure the maximum advantage an adversary would have distinguishing between a watermarked and an unwatermarked distribution. As defined in Section \ref{sec:prelims}, total variation distance exactly measures this metric. To empirically estimate TV distance, we prompt the model with 10 prompts and record the model's unwatermarked and watermarked output 10,000 times per prompt, collect the next token output and estimate a distribution. We then calculate the TV distance between the estimated distributions. The results are shown in Figure \ref{fig:TV}. The outputs have slight noise, as even an unwatermarked output picks up a slight TV distance when estimated based on token outputs.

\begin{figure}
    \centering
    \includegraphics[width=0.9\linewidth]{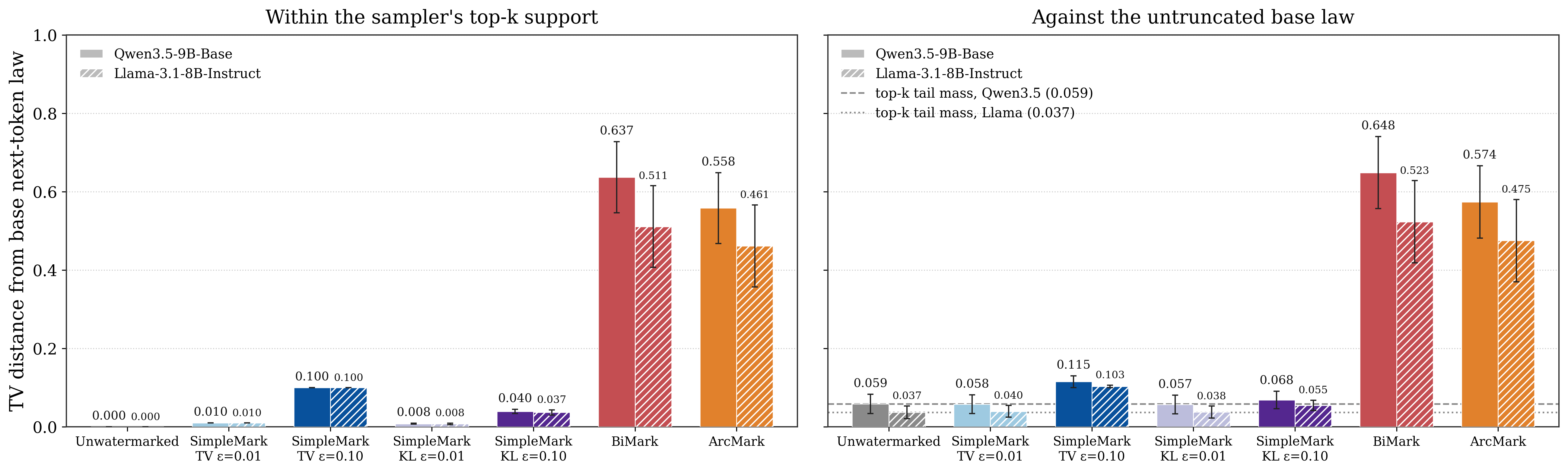}
    \caption{Empirical total variation distance and distinguishing bound from an unwatermarked output and a watermarked output for a baseline, \simplemark, Bimark, and \arcmark. The full bars represent Qwen3.5-9B-Base and the dashed bars represent Llama3.1-8B-Instruct.}
    \label{fig:TV}
\end{figure}

Interpreting these results, the lower the TV distance, the lower the chance the best possible distinguisher is able to tell whether an output was watermarked or unwatermarked. From our estimated distributions based on the model's outputs, before accounting for noise, for \simplemark~the distinguisher only gains an advantage of at most $\varepsilon$, adhering to the budget, while for \arcmark, the distinguisher gains a 46--57\% advantage and a 51--65\% advantage for Bimark. The untruncated law incurs additional TV distance from removing the tail mass outside the top-$k$. We can take away that in an adversarial setting, such as those used to analyze cryptographic protocols, other current watermarking distortion freeness metrics may not be feasible for watermarking schemes to remain secure.

\subsection{\simplemark Reasoning Performance}
We evaluate \simplemark under several reasoning tasks, GSM8K \citep{cobbe2021training}, GPQA-Diamond \citep{rein2023gpqa}, LiveCodeBench \citep{jain2025livecodebench}, and AIME~2024 \citep{aime24}. 
We use 300, 198, 100, and 30 problems respectively, with a generation budget of 4{,}096 tokens for GSM8K, 8{,}192 for GPQA-Diamond and LiveCodeBench and 16{,}384 for AIME, and every scheme is run on the identical set of problems with one sample per problem. \simplemark~embeds a 3- or 8-bit message at $\varepsilon{=}0.05$ (TV) and $\varepsilon{=}0.05$ (KL), both with $\alpha_{\max}{=}1500$. On these reasoning traces the cap binds on roughly 91\% of tokens.

Table \ref{tab:reasoning_task_accuracy} shows that by embedding \simplemark, the language model's reasoning performance does not degrade. By capping $\alpha$, we are more likely to avoid sampling undesired tokens that would degrade a model's performance. In Figure~\ref{fig:v_len_prefix_dec} we show that we can reliably recover the embedded watermark at a competitive rate under the chosen parameters. The results for Llama3.1-8B-Instruct can be found in Table~\ref{tab:reasoning_task_accuracy_llama}.

\begin{table}[t]
\centering
\small
\setlength{\tabcolsep}{5pt}
\resizebox{\linewidth}{!}{%
\begin{tabular}{lcccc}
\toprule
Scheme & GSM8K & GPQA-D & LCB & AIME 2024 \\
(trials) & (300) & (198) & ($100$) & ($30$) \\
\midrule
\multicolumn{5}{l}{\textit{3-bit payload}} \\
SimpleMark TV ($\varepsilon{=}0.05$, $\alpha_{\max}{=}1500$) & 278 & 112 & 33 & 8 \\
SimpleMark KL ($\varepsilon{=}0.05$, $\alpha_{\max}{=}1500$) & 274 & 108 & 40 & 9 \\
BiMark & 270 & 107 & 33 & 11 \\
ArcMark & 272 & 111 & 35 & 10 \\
Unwatermarked & 272 & 110 & 36 & 8 \\
\midrule
\multicolumn{5}{l}{\textit{8-bit payload}} \\
SimpleMark TV ($\varepsilon{=}0.05$, $\alpha_{\max}{=}1500$) & 275 & 103 & 38 & 9 \\
SimpleMark KL ($\varepsilon{=}0.05$, $\alpha_{\max}{=}1500$) & 273 & 102 & 41 & 8 \\
BiMark & 279 & 94 & 33 & 8 \\
ArcMark & 274 & 109 & 40 & 7 \\
Unwatermarked & 272 & 110 & 36 & 8 \\
\bottomrule
\end{tabular}}
\caption{Reasoning performance on tasks for Qwen3.5-9B (thinking mode, temperature 0.6) on GSM8K (300 problems), GPQA-Diamond (198 problems), LiveCodeBench (100 problems, graded on public test cases only, so counts are only comparable across schemes) and AIME 2024 (30 problems). We use identical tasks and datasets for every benchmark. We evaluate SimpleMark at two operating points ($q{=}53$, TV or KL $\varepsilon = 0.05$, cap $\alpha_{\max}{=}1500$), BiMark, ArcMark, and no watermark.}
\label{tab:reasoning_task_accuracy}
\end{table}

\subsection{Perplexity}

\begin{figure}
    \centering
    \includegraphics[width=0.9\linewidth]{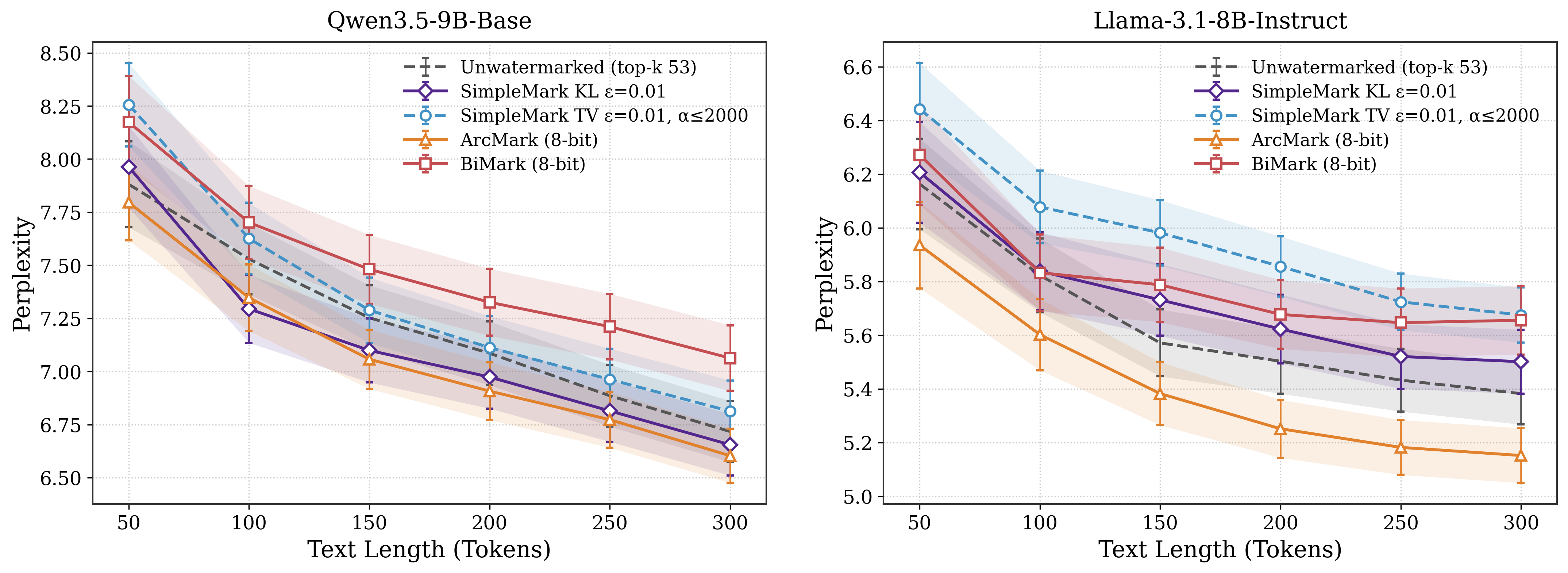}
    \caption{Perplexity results for token outputs ranging from 50-300 tokens for \simplemark, \arcmark, and Bimark. A low KL $\varepsilon$-budget \simplemark is consistently closest to that of the unwatermarked output.}
    \label{fig:perplexity}
\end{figure}

Perplexity is often used as a text quality metric for a watermark. In simple terms, it is how surprised a model is by its output through an exponentiation of the distribution's joint entropy. Perplexity can also be expressed as an exponentiation of the
watermarked generating distribution's entropy, plus the KL divergence between the unwatermarked and watermarked distributions. As expected Figure~\ref{fig:perplexity} validates that \simplemark achieves the perplexity closest to that of the unwatermarked distribution under low KL $\varepsilon$-budgets.

\subsection{Watermark Extraction Accuracy}
We measure message accuracy when recovering our watermark. Bit errors when recovering watermarks can cause the messages to be completely different, so we focus on message accuracy. For medium $\varepsilon$-budgets of $\{0.05, 0.1, 0.15\}$ we generate 300 tokens and use a prefix in increments of 50 tokens up to 300 tokens and attempt to extract the watermark. For lower $\varepsilon$-budgets of $\{0.01, 0.05\}$, we sample from 500--1000 tokens. From Figure~\ref{fig:v_len_prefix_dec}, at higher $\varepsilon$-budgets \simplemark's decoding capabilities are competitive with the current state of the art watermarks, which are designed to optimize decoding capabilities and thus have high distortion in token distributions when compared to the unwatermarked law.

\begin{figure}
    \centering
    \includegraphics[width=0.9\linewidth]{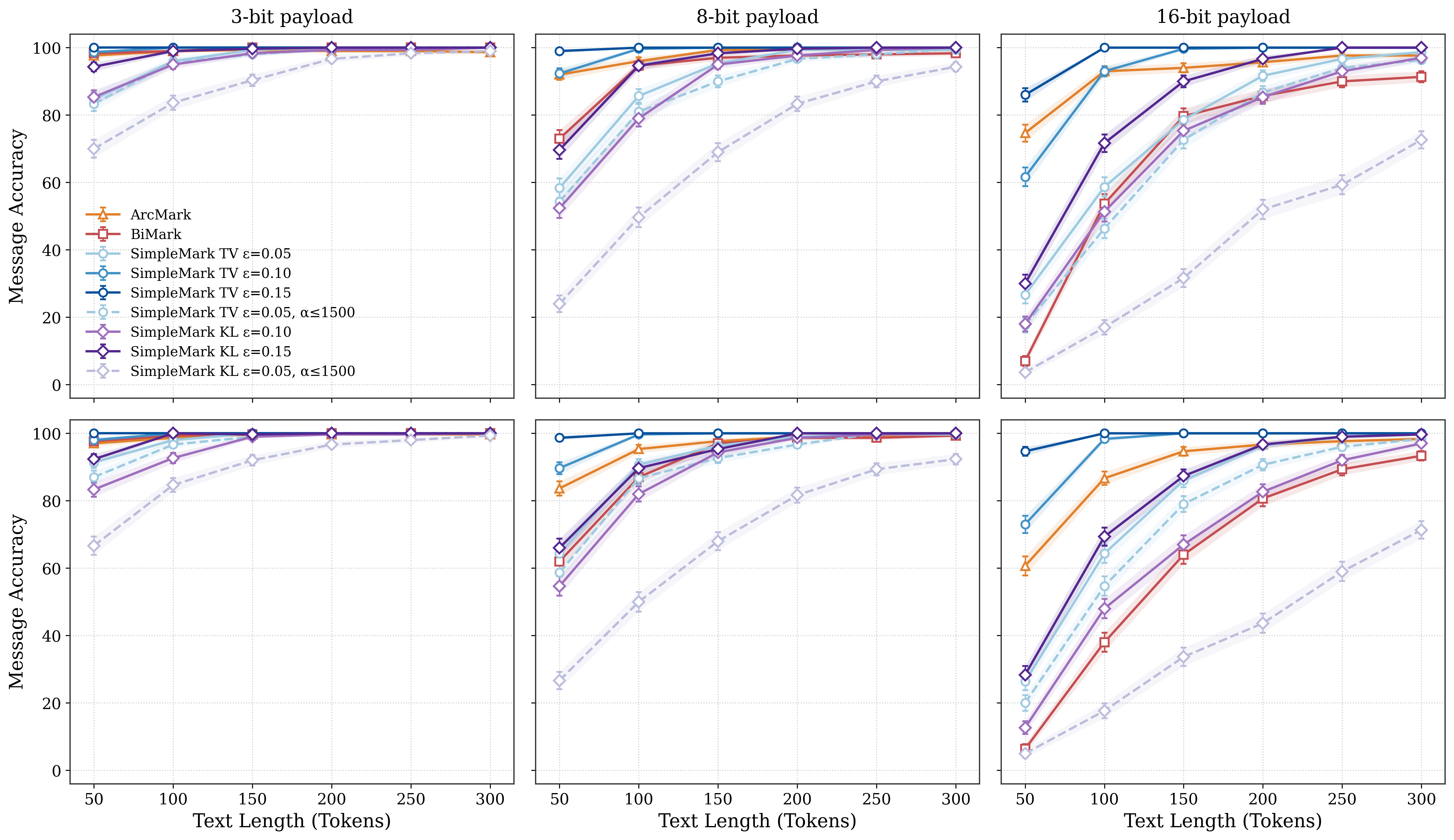}
    \caption{Message recovery for a variable length prefix of a 300 token sequence. For $\varepsilon \in \{0.05, 0.1, 0.15\}$. The \simplemark~decoding capabilities outperform the current SOTA at $\varepsilon = 0.15, $ and remain competitive with lower, more practical $\varepsilon$ budgets.}
    \label{fig:v_len_prefix_dec}
\end{figure}

Overall, our experiments validate that \simplemark achieves minimal distortion from the unwatermarked distribution as seen through the distinguishing advantage and perplexity experiments, while retaining a model's reasoning capabilities and competitive extraction accuracy.

\section{Conclusion}
\label{sec:conclusion}
In this work, 
we studied multi-bit text watermarking as a \emph{constrained communication problem}, highlighting a fundamental tradeoff between payload, reliability, and statistical detectability. Unlike average-key distortion guarantees, our per-key $f$-divergence formulation directly controls how much each realized watermark distribution can deviate from the base model, in the worst case. We show that exact per-key distortion-freeness forces zero embedding capacity, while nonzero divergence budgets lead to a tunable
rate--detectability tradeoff governed by the induced watermark channel.
Motivated by this framework, we introduced~\simplemark, which combines error-correcting codes with a simple divergence-constrained probability tilt to obtain a closed-form embedding rule. Empirically,~\simplemark substantially reduces distributional deviation from the base model while maintaining competitive message recovery, perplexity, and reasoning performance. Our results provide an information-theoretic foundation for designing multi-bit watermarks with explicit, tunable statistical-security guarantees.

\section{Acknowledgements}
\label{sec:acks}
This work was supported by a grant from Amazon via the
Amazon-Illinois Center on AI for Interactive Conversational Experiences (AICE) at 
the University of Illinois at Urbana-Champaign (UIUC).
Benjamin D. Kim is supported by an MIT Jacobs Presidential Fellowship and the National Science Foundation, under NSF GRFP award number 2141064.

Any opinions, fundings, and conclusions or recommendations expressed in this material are those of the
author(s) and do not necessarily reflect the views of the National Science Foundation or Amazon.

\clearpage

\bibliographystyle{alpha}

\bibliography{references}

@article{gilani2026arcmarkmultibitllmwatermark,
  title={ArcMark: Distortion-Free Multi-Byte LLM Watermark via Optimal Transport},
  author={Gilani, Atefeh and Vithana, Sajani and Long, Carol Xuan and Kosut, Oliver and Sankar, Lalitha and Calmon, Flavio P},
  journal={arXiv preprint arXiv:2602.07235},
  year={2026}
}

@book{gut2013probability,
  title={Probability: A Graduate Course},
  author={Gut, A.},
  isbn={9781461447078},
  lccn={2012941281},
  series={Springer Texts in Statistics},
  year={2013},
  publisher={Springer New York}
}

@book{MitzenmacherUp05,
 author = {Mitzenmacher, Michael and Upfal, Eli},
 title = {Probability and Computing: Randomized Algorithms and Probabilistic Analysis},
 year = {2005},
 isbn = {0521835402},
 publisher = {Cambridge University Press},
 address = {New York, NY, USA},
 URL = {https://doi.org/10.1017/CBO9780511813603},
 edition = {2nd}
}

@book{ross98,
  author = {Ross, Sheldon M.},
  edition = {Fifth},
  title = {A First Course in Probability},
  year = 1998
}

@book{keener2010theoretical,
  author = {Keener, R.W.},
  description = {Theoretical Statistics: Topics for a Core Course - Robert W. Keener - Google Books},
  isbn = {9780387938394},
  lccn = {2010935925},
  publisher = {Springer New York},
  series = {Springer Texts in Statistics},
  title = {Theoretical Statistics: Topics for a Core Course},
  year = 2010
}

@inproceedings{KirchenbauerGWK23,
  author       = {John Kirchenbauer and
                  Jonas Geiping and
                  Yuxin Wen and
                  Jonathan Katz and
                  Ian Miers and
                  Tom Goldstein},
  title        = {A Watermark for Large Language Models},
  booktitle    = {International Conference on Machine Learning, {ICML} 2023, 23-29 July
                  2023, Honolulu, Hawaii, {USA}},
  series       = {Proceedings of Machine Learning Research},
  pages        = {17061--17084},
  publisher    = {{PMLR}},
  year         = {2023},
  url          = {https://proceedings.mlr.press/v202/kirchenbauer23a.html}
}

@book{CoverT2006,
  at = {2008-03-31 06:17:47},
  author = {Cover, Thomas M. and Thomas, Joy A.},
  isbn = {0471241954},
  month = {July},
  publisher = {Wiley-Interscience},
  title = {Elements of Information Theory 2nd Edition (Wiley Series in Telecommunications and Signal Processing)},
  year = 2006
}

@book{polyanskiy2025information,
  title={Information Theory: From Coding to Learning},
  author={Polyanskiy, Y. and Wu, Y.},
  isbn={9781108832908},
  lccn={2024004713},
  year={2025},
  publisher={Cambridge University Press}
}

@article{MoulinO03,
  author       = {Pierre Moulin and
                  Joseph A. O'Sullivan},
  title        = {Information-theoretic analysis of information hiding},
  journal      = {{IEEE} Trans. Inf. Theory},
  volume       = {49},
  number       = {3},
  pages        = {563--593},
  year         = {2003},
  url          = {https://doi.org/10.1109/TIT.2002.808134},
  doi          = {10.1109/TIT.2002.808134}
}

@INPROCEEDINGS{Moulin05,
  author={Moulin, Pierre},
  booktitle={2005 13th European Signal Processing Conference}, 
  title={The method of types and its application to information hiding}, 
  year={2005},
  volume={},
  number={},
  pages={1-1},
  doi={}}

@InProceedings{Moulin03,
author="Moulin, Pierre",
title="Information-Hiding Games",
booktitle="Digital Watermarking",
year="2003",
publisher="Springer Berlin Heidelberg",
address="Berlin, Heidelberg",
pages="1--12",
isbn="978-3-540-36617-1"
}

@article{MoulinK2005,
    author  = "Pierre Moulin and Ralf Koetter",
    title   = "Data-Hiding Codes",
    journal = "Proceedings of the IEEE",
    volume  = "93",
    number  = "12",
    month   = dec,
    year    = "2005",
    pages   = "2083--2126"
}

@inproceedings{zhu2026quantilemark,
  title={QuantileMark: A Message-Symmetric Multi-bit Watermark for LLMs},
  author={Zhu, Junlin and Huang, Baizhou and Wan, Xiaojun},
  booktitle={Proceedings of the 64th Annual Meeting of the Association for Computational Linguistics (Volume 1: Long Papers)},
  pages={6790--6806},
  year={2026}
}

@inproceedings{feng2025bimark,
title={BiMark: Unbiased Multilayer Watermarking for Large Language Models},
author={Xiaoyan Feng and He Zhang and Yanjun Zhang and Leo Yu Zhang and Shirui Pan},
booktitle={Forty-second International Conference on Machine Learning},
year={2025},
url={https://openreview.net/forum?id=Zvyb3WAg03}
}

@inproceedings{yoo-etal-2024-advancing,
    title = "Advancing Beyond Identification: Multi-bit Watermark for Large Language Models",
    author = "Yoo, KiYoon  and
      Ahn, Wonhyuk  and
      Kwak, Nojun",
    editor = "Duh, Kevin  and
      Gomez, Helena  and
      Bethard, Steven",
    booktitle = "Proceedings of the 2024 Conference of the North American Chapter of the Association for Computational Linguistics: Human Language Technologies (Volume 1: Long Papers)",
    month = jun,
    year = "2024",
    address = "Mexico City, Mexico",
    publisher = "Association for Computational Linguistics",
    url = "https://aclanthology.org/2024.naacl-long.224/",
    doi = "10.18653/v1/2024.naacl-long.224",
    pages = "4031--4055",
}

@InProceedings{pmlr-v267-jiang25j,
  title = 	 {{S}tealth{I}nk: A Multi-bit and Stealthy Watermark for Large Language Models},
  author =       {Jiang, Ya and Wu, Chuxiong and Boroujeny, Massieh Kordi and Mark, Brian and Zeng, Kai},
  booktitle = 	 {Proceedings of the 42nd International Conference on Machine Learning},
  pages = 	 {27685--27709},
  year = 	 {2025},
  volume = 	 {267},
  series = 	 {Proceedings of Machine Learning Research},
  month = 	 {13--19 Jul},
  publisher =    {PMLR},
  url = 	 {https://proceedings.mlr.press/v267/jiang25j.html}
}

@inproceedings{xu2025robust,
title={Robust Multi-bit Text Watermark with {LLM}-based Paraphrasers},
author={Xiaojun Xu and Jinghan Jia and Yuanshun Yao and Yang Liu and Hang Li},
booktitle={Forty-second International Conference on Machine Learning},
year={2025},
url={https://openreview.net/forum?id=DVjkling5x}
}

@inproceedings{he2025theoretically,
title={Theoretically Grounded Framework for {LLM} Watermarking: A Distribution-Adaptive Approach},
author={Haiyun He and Yepeng Liu and Ziqiao Wang and Yongyi Mao and Yuheng Bu},
booktitle={The Thirty-ninth Annual Conference on Neural Information Processing Systems},
year={2025},
url={https://openreview.net/forum?id=CMmKcHFDKL}
}

@article{CW06,
  title={Quantization index modulation: A class of provably good methods for digital watermarking and information embedding},
  author={Chen, Brian and Wornell, Gregory W},
  journal={IEEE Transactions on Information theory},
  volume={47},
  number={4},
  pages={1423--1443},
  year={2001},
  publisher={IEEE}
}

@article{CACHIN200441,
title = {An information-theoretic model for steganography},
journal = {Information and Computation},
volume = {192},
number = {1},
pages = {41-56},
year = {2004},
issn = {0890-5401},
doi = {https://doi.org/10.1016/j.ic.2004.02.003},
url = {https://www.sciencedirect.com/science/article/pii/S0890540104000409},
author = {Christian Cachin},
}

@book{CMBFK07,
author = {Cox, Ingemar and Miller, Matthew and Bloom, Jeffrey and Fridrich, Jessica and Kalker, Ton},
title = {Digital Watermarking and Steganography},
year = {2007},
isbn = {9780080555805},
publisher = {Morgan Kaufmann Publishers Inc.},
address = {San Francisco, CA, USA},
edition = {2},
}

@inproceedings{qu2025provably,
  title={Provably robust multi-bit watermarking for $\{$AI-generated$\}$ text},
  author={Qu, Wenjie and Zheng, Wengrui and Tao, Tianyang and Yin, Dong and Jiang, Yanze and Tian, Zhihua and Zou, Wei and Jia, Jinyuan and Zhang, Jiaheng},
  booktitle={34th USENIX Security Symposium (USENIX Security 25)},
  pages={201--220},
  year={2025}
}

@inproceedings{cohen2025watermarking,
  title={Watermarking language models for many adaptive users},
  author={Cohen, Aloni and Hoover, Alexander and Schoenbach, Gabe},
  booktitle={2025 IEEE Symposium on Security and Privacy (SP)},
  pages={2583--2601},
  year={2025},
  organization={IEEE}
}

@article{xu2026xmark,
  title={XMark: Reliable Multi-Bit Watermarking for LLM-Generated Texts},
  author={Xu, Jiahao and Hu, Rui and Kotevska, Olivera and Zhang, Zikai},
  journal={arXiv preprint arXiv:2604.05242},
  year={2026}
}

@article{Longpre24,
	author = {Longpre, Shayne and Mahari, Robert and Chen, Anthony and Obeng-Marnu, Naana and Sileo, Damien and Brannon, William and Muennighoff, Niklas and Khazam, Nathan and Kabbara, Jad and Perisetla, Kartik and Wu, Xinyi (Alexis) and Shippole, Enrico and Bollacker, Kurt and Wu, Tongshuang and Villa, Luis and Pentland, Sandy and Hooker, Sara},
	doi = {10.1038/s42256-024-00878-8},
	isbn = {2522-5839},
	journal = {Nature Machine Intelligence},
	number = {8},
	pages = {975--987},
	title = "A large-scale audit of dataset licensing and attribution in {AI}",
	url = {https://doi.org/10.1038/s42256-024-00878-8},
	volume = {6},
	year = {2024}
}

@article{rein2023gpqa,
  title={Gpqa: A graduate-level google-proof q\&a benchmark},
  author={Rein, David and Hou, Betty Li and Stickland, Asa Cooper and Petty, Jackson and Pang, Richard Yuanzhe and Dirani, Julien and Michael, Julian and Bowman, Samuel R},
  journal={arXiv preprint arXiv:2311.12022},
  year={2023}
}

@article{cobbe2021training,
  title={Training verifiers to solve math word problems},
  author={Cobbe, Karl and Kosaraju, Vineet and Bavarian, Mohammad and Chen, Mark and Jun, Heewoo and Kaiser, Lukasz and Plappert, Matthias and Tworek, Jerry and Hilton, Jacob and Nakano, Reiichiro and others},
  journal={arXiv preprint arXiv:2110.14168},
  year={2021}
}

@book{bok:MW,
  author = {MacWilliams, F.J. and Sloane, N.J.A.},
  edition = {2nd},
  publisher = {North-holland Publishing Company},
  title = {The Theory of Error-Correcting Codes},
  year = 1978
}

@article{grattafiori2024llama,
  title={The llama 3 herd of models},
  author={Grattafiori, Aaron and Dubey, Abhimanyu and Jauhri, Abhinav and Pandey, Abhinav and Kadian, Abhishek and Al-Dahle, Ahmad and Letman, Aiesha and Mathur, Akhil and Schelten, Alan and Vaughan, Alex and others},
  journal={arXiv preprint arXiv:2407.21783},
  year={2024}
}

@misc{qwen3.5,
    title  = {{Qwen3.5}: Towards Native Multimodal Agents},
    author = {{Qwen Team}},
    month  = {February},
    year   = {2026},
    url    = {https://qwen.ai/blog?id=qwen3.5}
}

@book{Munkers84,
  author = {Munkres, James R.},
  howpublished = {Hardcover},
  publisher = {Addison Wesley Publishing Company},
  title = {{Elements of Algebraic Topology}},
  year = 1984
}

@book{rudin1964principles,
  title={Principles of Mathematical Analysis},
  author={Rudin, W.},
  isbn={9780070542310},
  lccn={63021479},
  series={International series in pure and applied mathematics},
  url={https://books.google.com/books?id=yifvAAAAMAAJ},
  year={1964},
  publisher={McGraw-Hill}
}

@article{ChristGZ23,
  author       = {Miranda Christ and
                  Sam Gunn and
                  Or Zamir},
  title        = {Undetectable Watermarks for Language Models},
  journal      = {{IACR} Cryptol. ePrint Arch.},
  pages        = {763},
  year         = {2023},
  url          = {https://eprint.iacr.org/2023/763}
}

@article{kuditipudi2024robust,
title={{Robust distortion-free watermarks for language models}},
author={Rohith Kuditipudi and John Thickstun and Tatsunori Hashimoto and Percy Liang},
journal={Transactions on Machine Learning Research},
year={2024},
url={https://openreview.net/forum?id=FpaCL1MO2C},
}

@article{jiang2026mirrormark,
  title={MirrorMark: Generalizable Mirrored Sampling for Multi-bit LLM Watermarking},
  author={Jiang, Ya and Boroujeny, Massieh Kordi and Kumar, Surender Suresh and Zeng, Kai},
  journal={arXiv preprint arXiv:2601.22246},
  year={2026}
}

@misc{cui2026mc,
      title={MC$^2$Mark: Distortion-Free Multi-Bit Watermarking for Long Messages}, 
      author={Xuehao Cui and Ruibo Chen and Yihan Wu and Heng Huang},
      year={2026},
      eprint={2602.14030},
      archivePrefix={arXiv},
      primaryClass={cs.CR},
      url={https://arxiv.org/abs/2602.14030}, 
}

@inproceedings{jain2025livecodebench,
  title={Livecodebench: Holistic and contamination free evaluation of large language models for code},
  author={Jain, Naman and Gu, Alex and Li, Wen-Ding and Yan, Fanjia and Zhang, Tianjun and Wang, Sida and Solar-Lezama, Armando and Sen, Koushik and Stoica, Ion},
  booktitle={International Conference on Learning Representations},
  volume={2025},
  pages={58791--58831},
  year={2025}
}

@misc{aime24,
      title={American Invitational Mathematics Examination (AIME) 2024}, 
      author={Zhang, Yifan and Math-AI, Team},
      year={2024},
}

@book{KL14,
author = {Katz, Jonathan and Lindell, Yehuda},
title = {Introduction to Modern Cryptography, Second Edition},
year = {2014},
isbn = {1466570261},
publisher = {Chapman \& Hall/CRC},
edition = {2nd}
}

@article{raffel2020exploring,
  title={Exploring the limits of transfer learning with a unified text-to-text transformer},
  author={Raffel, Colin and Shazeer, Noam and Roberts, Adam and Lee, Katherine and Narang, Sharan and Matena, Michael and Zhou, Yanqi and Li, Wei and Liu, Peter J},
  journal={Journal of machine learning research},
  volume={21},
  number={140},
  pages={1--67},
  year={2020}
}

@misc{guo2026feedback,
      title={Feedback Coding Enables Inference-Time Covert Agentic Communication}, 
      author={Sidong Guo and Sajani Vithana and Atefeh Gilani and Lalitha Sankar and Oliver Kosut and Flavio P. Calmon},
      year={2026},
      eprint={2609.24994},
      archivePrefix={arXiv},
      primaryClass={cs.IT},
      url={https://arxiv.org/abs/2609.24994}, 
}

@inproceedings{
      zhang2024watermarks,
      title={Watermarks in the Sand: Impossibility of Strong Watermarking for Generative Models},
      author={Hanlin Zhang and Benjamin L. Edelman and Danilo Francati and Daniele Venturi and Giuseppe Ateniese and Boaz Barak},
      booktitle={Forty-first International Conference on Machine Learning},
      year={2024},
      }

@inproceedings{AKV03,
author = {Adelsbach, Andr{\'e} and Katzenbeisser, Stefan and Veith, Helmut},
title = {Watermarking schemes provably secure against copy and ambiguity attacks},
year = {2003},
isbn = {1581137869},
publisher = {Association for Computing Machinery},
address = {New York, NY, USA},
url = {https://doi.org/10.1145/947380.947395},
doi = {10.1145/947380.947395},
booktitle = {Proceedings of the 3rd ACM Workshop on Digital Rights Management},
pages = {111–119},
numpages = {9},
location = {Washington, DC, USA},
series = {DRM '03}
}

@inproceedings{chang-etal-2025-watermark,
    title = "Watermark Smoothing Attacks against Language Models",
    author = "Chang, Hongyan  and
      Hassani, Hamed  and
      Shokri, Reza",
    editor = "Christodoulopoulos, Christos  and
      Chakraborty, Tanmoy  and
      Rose, Carolyn  and
      Peng, Violet",
    booktitle = "Findings of the Association for Computational Linguistics: EMNLP 2025",
    month = nov,
    year = "2025",
    address = "Suzhou, China",
    publisher = "Association for Computational Linguistics",
    url = "https://aclanthology.org/2025.findings-emnlp.264/",
    doi = "10.18653/v1/2025.findings-emnlp.264",
    pages = "4915--4941",
    ISBN = "979-8-89176-335-7",
}

\clearpage
\appendix

\newpage

\subsection*{AI use statement}

In this work, we used generative AI tools for generating the code 
to run the experiments.
We also used generative AI to generate Figure~\ref{fig:diagram}
that explains our scheme on a high level.

We have reviewed all AI-assisted work. 
We take full responsibility for the final content of this work.

\section{Encoder and Decoder Algorithms}
\label{sec:encoder-decoder}

\simplemark's encoding and decoding procedures are described in Section~\ref{sec:main-algorithms}. In this section, we share the algorithmic forms of Algorithm~\ref{alg:pk-tv-encoder} (encoder) and Algorithm~\ref{alg:pk-tv-decoder} (decoder).

Algorithm~\ref{alg:pk-tv-encoder} describes the encoder: the message is
LDPC-encoded once, and each generation step derives the key from the
preceding tokens. Subsequently the corresponding target vertex and tilt towards that token are calculated. A token is then sampled from the tilted watermarked distribution.

Algorithm~\ref{alg:pk-tv-decoder} gives the decoder: the decoder starts by recovering the base laws and key sequence from the received text. Next it scores each code symbol under the same $f$-divergence constrained channel used for
embedding, and returns the maximum-likelihood codeword.

\begin{algorithm}[t]
\caption{Per-key divergence-constrained encoder}
\label{alg:pk-tv-encoder}
\begin{algorithmic}[1]
\Require Message $m\in\mathcal{M}$, blocklength $n$, LDPC codebook $\{C_m(1:n)\}_{m\in\mathcal{M}}$ over $\mathbb{F}_q$, secret key $K$, hash window $h$, divergence $D\in\{d_{\mathrm{TV}},D_{KL}\}$, budget $\varepsilon$, base laws $(Q_t)_{t=1}^n$
\Ensure Watermarked token sequence $X_{1:n}$

\State Encode $m$ as the codeword $C_m(1),\dots,C_m(n)$.
\For{$t=1$ to $n$}
    \State Derive the side information from the key and the preceding tokens,
    \[
    (V_t,\sigma_t) \gets \mathrm{Hash}(K,\,X_{t-h:t-1}),
    \]
    where $V_t\in\mathbb{F}_q$ and $\sigma_t:\mathcal{S}_t\to\mathbb{F}_q$ is a keyed bijection on the truncated support.
    \State Form the target vertex
    \[
    Z_t \gets (C_m(t)+V_t) \bmod q.
    \]
    \State Compute the base mass on the target vertex,
    \[
    B_t \gets \sum_{x\,:\,\sigma_t(x)=Z_t} Q_t(x),
    \]
    and obtain $s^\star_t$ from $(B_t,\varepsilon,D)$: $s^\star_t = B_t+\min(\varepsilon,1-B_t)$ if $D=d_{\mathrm{TV}}$; the unique root in $[B_t,1]$ of $s\log\frac{s}{B_t}+(1-s)\log\frac{1-s}{1-B_t}=\varepsilon$ if $D=D_{KL}$. If applicable, cap $s^*_t = \min\{s^*, \alpha_{\max} B\}$
    \State Form the closed-form tilt
    \[
    Q_t^\star(x)
    \gets
    \begin{cases}
    \dfrac{s^\star_t}{B_t}\,Q_t(x), & \sigma_t(x)=Z_t,\\[6pt]
    \dfrac{1-s^\star_t}{1-B_t}\,Q_t(x), & \sigma_t(x)\neq Z_t.
    \end{cases}
    \]
    \State Sample
    \[
    X_t \sim Q_t^\star.
    \]
\EndFor
\State Output $X_{1:n}$; the side information $(V_t,\sigma_t)_{t=1}^n$ is recomputable from $K$ and the text itself.
\end{algorithmic}
\end{algorithm}

\begin{algorithm}[t]
\caption{Per-key divergence-constrained MLE Decoder}
\label{alg:pk-tv-decoder}
\begin{algorithmic}[1]
\Require Observed tokens $x_{1:n}$, secret key $K$, hash window $h$, codebook $\{C_m(1:n)\}_{m\in\mathcal{M}}$, divergence $D$, budget $\varepsilon$, base model $Q(\cdot\mid\cdot)$
\Ensure Message estimate $\widehat m$

\For{$t=1$ to $n$}
    \State Recover the base law $Q_t(\cdot)=Q(\cdot\mid x_{<t})$, truncated and renormalized as in encoding, and recompute $(V_t,\sigma_t)\gets\mathrm{Hash}(K,\,x_{t-h:t-1})$.
    \State For each symbol $a\in\mathbb{F}_q$, form the candidate vertex and its base mass,
    \[
    z_t(a)=(a+v_t)\bmod q,
    \qquad
    B_a \gets \sum_{x\,:\,\sigma_t(x)=z_t(a)} Q_t(x),
    \]
    obtain $s^\star_a$ from $(B_a,\varepsilon,D)$ as in Algorithm~\ref{alg:pk-tv-encoder}, and store the per-symbol log-likelihood
    \[
    L_t(a)
    \gets
    \begin{cases}
    \log\!\bigl(\tfrac{s^\star_a}{B_a}\,Q_t(x_t)\bigr), & \sigma_t(x_t)=z_t(a),\\[6pt]
    \log\!\bigl(\tfrac{1-s^\star_a}{1-B_a}\,Q_t(x_t)\bigr), & \sigma_t(x_t)\neq z_t(a).
    \end{cases}
    \]
\EndFor
\For{each candidate message $m\in\mathcal{M}$}
    \State Compute the log-likelihood score
    \[
    \Lambda(m)
    :=
    \sum_{t=1}^n L_t\bigl(C_m(t)\bigr).
    \]
\EndFor
\State Output any maximizer
\[
\widehat m \in \arg\max_{m\in\mathcal{M}} \Lambda(m).
\]
\end{algorithmic}
\end{algorithm}

\section{Theoretical Results on Capacity of Watermarking Channel}
\label{sec:capacity}

Throughout the appendix, we consider TV distance as the $f$-divergence to solve for all our theoretical results. As previously stated, some of the theoretical results are limited to simpler settings and most results hold for all $f$-divergences. We leave additional results to future work.

Also, for simplicity, we assume independent side information and a (conditional) memoryless channel.

\subsection{Token-Level}

\paragraph{Notation and standing assumptions.}
Let $\mathcal{X}=[1:N]$ denote the token vocabulary. At time $t$, the base language model induces a next-token distribution
\[
Q_t \in \Delta(\mathcal{X}),
\]
where $\Delta(\mathcal{X})$ is the probability simplex on $\mathcal{X}$.

As in \arcmark and \simplemark, the message
\[
M \in \mathcal{M} := \{1,\dots,2^k\}
\]

is encoded into a codeword with an error correcting code
\[
C_M(1),\dots,C_M(n)\in \mathbb{F}_q := \{0,1,\dots,q-1\},
\]
and the encoder and decoder share side information
\[
V_t \sim \mathrm{Unif}(\mathbb{F}_q), \qquad t=1,\dots,n.
\]
The state variable is the target vertex
\[
Z_t := (C_M(t) + V_t) \bmod q \;\in\; \mathcal{Z} := \mathbb{F}_q,
\]
identified with the vertices of the $(q-1)$-simplex via the keyed
bijection $\sigma_t$ of Section~\ref{sec:main-algorithms}.

We assume throughout that:

\begin{enumerate}
\item The message $M$ is uniform on $\mathcal{M}$.
\item The side information sequence $V_1,\dots,V_n$ is i.i.d. and independent of $M$.
\item Given $(M,V_1,\dots,V_n)$, the encoder generates tokens through a conditional-memoryless channel:
\[
P_{X_{1:n}\mid M,V_{1:n}}(x_{1:n}\mid m,v_{1:n})
=
\prod_{t=1}^n
P_{X_t\mid Z_t=z_t(m,v_t)}(x_t).
\]
\item The decoder observes $(X_{1:n},V_{1:n})$ and outputs $\widehat M$.
\end{enumerate}

\begin{definition}[Uniform per-key $\varepsilon$-TV distortion]
Fix $\varepsilon \ge 0$. A watermarking scheme is said to be \emph{uniformly per-key $\varepsilon$-TV distortion-free} if for every time $t$, every message $m\in\mathcal{M}$, and every realized key value $v\in\mathbb{F}_q$,
\[
d_{\mathrm{TV}}
\bigl(
P_{X_t\mid M=m,V_t=v},
Q_t
\bigr)
\le \varepsilon.
\]
Equivalently, since $Z_t$ is a deterministic function of $(m,v)$, the scheme is uniformly per-key $\varepsilon$-TV distortion-free iff
\[
d_{\mathrm{TV}}
\bigl(
P_{X_t\mid Z_t=z},
Q_t
\bigr)
\le \varepsilon
\qquad \forall z\in\mathcal{Z},\ \forall t.
\]
\end{definition}

\begin{definition}[Feasible per-key channel family]
For each time $t$ and distortion budget $\varepsilon\ge 0$, define
\[
\mathcal{F}_t(\varepsilon)
:=
\left\{
\bigl(P_{X_t\mid Z_t=z}\bigr)_{z\in\mathcal{Z}}
:
d_{\mathrm{TV}}(P_{X_t\mid Z_t=z},Q_t)\le \varepsilon
\ \ \forall z\in\mathcal{Z}
\right\}.
\]
\end{definition}

\begin{definition}[Per-key constrained embedding channel]
For each time $t$, choose a channel family
\[
\bigl(P^{(\varepsilon)}_{X_t\mid Z_t=z}\bigr)_{z\in\mathcal{Z}}
\in \mathcal{F}_t(\varepsilon).
\]
Given message $m$ and side information $v_t$, form
\[
z_t=z_t(m,v_t)=(C_m(t)+v_t) \bmod q,
\]
and sample
\[
X_t \sim P^{(\varepsilon)}_{X_t\mid Z_t=z_t}.
\]
\end{definition}

\begin{proposition}[Uniform per-key distortion-freeness]
\label{prop:uniform-per-key-tv}
If the encoder uses channel families
\[
\bigl(P^{(\varepsilon)}_{X_t\mid Z_t=z}\bigr)_{z\in\mathcal{Z}}
\in \mathcal{F}_t(\varepsilon)
\qquad \text{for all } t,
\]
then the resulting scheme is uniformly per-key $\varepsilon$-TV distortion-free:
\[
d_{\mathrm{TV}}
\bigl(
P_{X_t\mid M=m,V_t=v},
Q_t
\bigr)
\le \varepsilon
\qquad
\forall t,\ \forall m,\ \forall v.
\]
\end{proposition}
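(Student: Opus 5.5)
The plan is to reduce the claim to the definition of $\mathcal{F}_t(\varepsilon)$ by identifying the conditional law $P_{X_t\mid M=m,V_t=v}$ with one member of the chosen channel family. I fix $t$, $m\in\mathcal{M}$ and $v\in\mathbb{F}_q$, and set $z:=z_t(m,v)=(C_m(t)+v)\bmod q$. This is a deterministic element of $\mathcal{Z}$, because the codeword symbol $C_m(t)$ is fixed once $m$ is fixed.

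The first step is to compute the marginal law of $X_t$ given $(M,V_t)=(m,v)$. The conditional-memoryless assumption (item 3 of the standing assumptions) gives the joint law of $X_{1:n}$ given $(M,V_{1:n})$ as a product of factors $P_{X_s\mid Z_s=z_s(m,v_s)}$. Summing over $x_s$ for $s\neq t$, each such factor sums to one, so
\[
P_{X_t\mid M=m,V_{1:n}=v_{1:n}}=P^{(\varepsilon)}_{X_t\mid Z_t=z_t(m,v_t)},
\]
which depends on $v_{1:n}$ only through $v_t$. Averaging over $V_s$ for $s\neq t$, which are independent of $(M,V_t)$ by item 2, leaves this law unchanged. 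Hence $P_{X_t\mid M=m,V_t=v}=P^{(\varepsilon)}_{X_t\mid Z_t=z}$.

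The second step is immediate. Membership of the family in $\mathcal{F}_t(\varepsilon)$ means $d_{\mathrm{TV}}(P^{(\varepsilon)}_{X_t\mid Z_t=z'},Q_t)\le\varepsilon$ for every $z'\in\mathcal{Z}$, and in particular for $z'=z$. Since $t$, $m$ and $v$ were arbitrary, the bound holds for all $t,m,v$, which is the claim.

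The only delicate point is the marginalization in the first step. One must check that conditioning on $V_t$ alone, rather than on all of $V_{1:n}$, does not change the law of $X_t$. This rests on the product form together with the independence of the side-information coordinates, and both are standing assumptions. Everything else is unpacking definitions, so I expect no real obstacle.
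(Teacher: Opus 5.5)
Your proof is correct and follows the paper's argument. You identify $P_{X_t\mid M=m,V_t=v}$ with the family member $P^{(\varepsilon)}_{X_t\mid Z_t=z_t(m,v)}$ and read the bound off the definition of $\mathcal{F}_t(\varepsilon)$; the paper takes that identification as holding ``by construction,'' whereas you derive it by marginalizing the conditional-memoryless product (and there the independence of the $V_s$, $s\neq t$, is not actually needed, since the law of $X_t$ given $(M,V_{1:n})=(m,v_{1:n})$ is already constant in those coordinates).
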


\begin{proof}
Fix $t\in[1:n]$, $m\in\mathcal{M}$, and $v\in\mathbb{F}_q$. By construction,
\[
Z_t = z_t(m,v) = (C_m(t)+v) \bmod q \in \mathcal{Z}.
\]
Conditioned on $(M,V_t)=(m,v)$, the encoder samples $X_t$ according to the conditional law attached to this state:
\[
P_{X_t\mid M=m,V_t=v}
=
P^{(\varepsilon)}_{X_t\mid Z_t=z_t(m,v)}.
\]
Since
\[
\bigl(P^{(\varepsilon)}_{X_t\mid Z_t=z}\bigr)_{z\in\mathcal{Z}}
\in \mathcal{F}_t(\varepsilon),
\]
the definition of $\mathcal{F}_t(\varepsilon)$ yields
\[
d_{\mathrm{TV}}
\Bigl(
P^{(\varepsilon)}_{X_t\mid Z_t=z},
Q_t
\Bigr)
\le \varepsilon
\qquad \forall z\in\mathcal{Z}.
\]
Applying this to $z=z_t(m,v)$ gives
\[
d_{\mathrm{TV}}
\Bigl(
P_{X_t\mid M=m,V_t=v},
Q_t
\Bigr)
=
d_{\mathrm{TV}}
\Bigl(
P^{(\varepsilon)}_{X_t\mid Z_t=z_t(m,v)},
Q_t
\Bigr)
\le \varepsilon.
\]
Because $t,m,v$ were arbitrary, the claim follows uniformly over all times, messages, and realized keys.
\end{proof}

\begin{definition}[Single-letter per-time constrained capacity]
For each time $t$, define
\[
C_t(\varepsilon)
:=
\sup_{\bigl(P_{X_t\mid Z_t=z}\bigr)_{z\in\mathcal{Z}}\in \mathcal{F}_t(\varepsilon)}
I(Z_t;X_t),
\]
where the mutual information is computed under the joint law
\[
P_{Z_t,X_t}(z,x)=P_{Z_t}(z)P_{X_t\mid Z_t=z}(x).
\]
\end{definition}

\begin{definition}[Universal single-letter constrained capacity]
\label{def:universal-Ct}
For each time $t$, define
\[
\widetilde C_t(\varepsilon)
:=
\sup_{\substack{(P_{X_t\mid Z_t=z})_{z\in\mathcal Z}\in \mathcal F_t(\varepsilon)\\ P_{Z_t}\in \Delta(\mathcal Z)}}
I(Z_t;X_t),
\]
where the mutual information is computed under the joint law
\[
P_{Z_t,X_t}(z,x)=P_{Z_t}(z)\,P_{X_t\mid Z_t=z}(x).
\]
\end{definition}

\begin{theorem}[Converse under uniform per-key TV constraint]
\label{thm:correct-converse}
Assume the conditional-memoryless model. Let a sequence of length-$n$ watermarking schemes satisfy the uniform per-key $\varepsilon$-TV constraint, and let the decoder error probability
\[
P_e^{(n)}:=\Pr[\widehat M\neq M]
\]
tend to zero as $n\to\infty$. If
\[
R_n:=\frac{k}{n}
\]
denotes the embedded rate in bits per token, then for every $n$,
\[
R_n
\le
\frac1n\sum_{t=1}^n \widetilde C_t(\varepsilon)
+
\frac{1+nR_n P_e^{(n)}}{n}.
\]
Consequently, if $(R_n)_n$ is bounded and $P_e^{(n)}\to 0$, then
\[
\limsup_{n\to\infty} R_n
\le
\limsup_{n\to\infty}\frac1n\sum_{t=1}^n \widetilde C_t(\varepsilon).
\]
\end{theorem}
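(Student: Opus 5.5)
The argument is the standard Fano-plus-chain-rule converse, adapted to side information $V_{1:n}$ known at the decoder.

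\textbf{Step 1 (Fano).} Since $M$ is uniform on $2^k$ values, $H(M)=k=nR_n$. The decoder output $\widehat M$ is a function of $(X_{1:n},V_{1:n})$, so Fano's inequality gives
\[
H(M\mid X_{1:n},V_{1:n})\le 1+P_e^{(n)}\,nR_n .
\]
Therefore
\[
nR_n\le I(M;X_{1:n},V_{1:n})+1+nR_nP_e^{(n)}.
\]

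\textbf{Step 2 (removing the key).} Because $V_{1:n}$ is independent of $M$, $I(M;V_{1:n})=0$. Hence
\[
I(M;X_{1:n},V_{1:n})=I(M;X_{1:n}\mid V_{1:n}).
\]

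\textbf{Step 3 (single-letterization).} Expand by the chain rule:
\[
I(M;X_{1:n}\mid V_{1:n})=\sum_t\bigl[H(X_t\mid X_{<t},V_{1:n})-H(X_t\mid X_{<t},M,V_{1:n})\bigr].
\]
By the conditional-memoryless assumption, given $(M,V_{1:n})$ the token $X_t$ depends only on $Z_t=z_t(M,V_t)$. So the second entropy equals $H(X_t\mid Z_t,V_{1:n})$, which in turn equals $H(X_t\mid Z_t)$ once we use that, given $Z_t$, $X_t$ is conditionally independent of everything else. For the first entropy, dropping conditioning gives $H(X_t\mid X_{<t},V_{1:n})\le H(X_t)$. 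Each summand is therefore at most $I(Z_t;X_t)$, computed under the induced law of $Z_t$ (whatever marginal the code and key induce) and the channel family $(P_{X_t\mid Z_t=z})_z$.

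\textbf{Step 4 (capacity bound).} By Proposition~\ref{prop:uniform-per-key-tv}, which follows directly from the uniform per-key constraint, this channel family lies in $\mathcal F_t(\varepsilon)$. The marginal $P_{Z_t}$ is some element of $\Delta(\mathcal Z)$. Hence $I(Z_t;X_t)\le\widetilde C_t(\varepsilon)$ by Definition~\ref{def:universal-Ct}; taking the supremum over $P_{Z_t}$ is exactly what makes the bound independent of the code. Summing over $t$ and dividing by $n$ yields the stated inequality.

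\textbf{Step 5 (limit).} If $R_n\le \bar R$ for all $n$, then
\[
\frac{1+nR_nP_e^{(n)}}{n}\le \frac1n+\bar R P_e^{(n)}\to0,
\]
and taking $\limsup$ gives the asymptotic claim.

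The main obstacle is Step 3, specifically the equality $H(X_t\mid X_{<t},M,V_{1:n})=H(X_t\mid Z_t)$. This requires reading assumption 3 carefully. The product form says that, conditioned on $(M,V_{1:n})$, the tokens are independent with $X_t$ drawn from $P_{X_t\mid Z_t}$. Thus $X_t$ is independent of $X_{<t}$ given $(M,V_{1:n})$, and the law of $X_t$ depends on $(M,V_{1:n})$ only through $Z_t$. That makes $(M,V_{1:n},X_{<t})-Z_t-X_t$ a Markov chain, which is precisely what the equality needs. I would state this Markov chain explicitly. If the key were text-dependent, as in the practical hashed scheme, this step would fail, which is why the theorem assumes independent side information.
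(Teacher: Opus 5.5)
Your proof is correct. Steps 1, 2 and 5 match the paper (Fano, dropping $V_{1:n}$ by independence, chain rule, vanishing remainder). The single-letterization in Step 3 takes a different route.

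\textbf{The paper's route.} It keeps the conditioning on $U_t=(X_{1:t-1},V_{1:n})$ throughout. It applies conditional data processing along $M\to Z_t\to X_t$ given $U_t$, obtaining $I(M;X_t\mid U_t)\le I(Z_t;X_t\mid U_t)$. It then writes this as an average over $u$ of $I(Z_t;X_t\mid U_t=u)$. Each term is the mutual information of the same feasible channel family, driven by the posterior input law $P_{Z_t\mid U_t=u}$. That posterior depends on $V_t$ and on the code, so it can be arbitrary. This is why the paper's route needs the supremum over input laws built into $\widetilde C_t(\varepsilon)$.

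\textbf{Your route.} You use the Cover--Thomas argument: bound $H(X_t\mid X_{<t},V_{1:n})\le H(X_t)$, and identify $H(X_t\mid X_{<t},M,V_{1:n})=H(X_t\mid Z_t)$ through the Markov chain $(M,V_{1:n},X_{<t})-Z_t-X_t$. This lands you on the unconditional $I(Z_t;X_t)$ at the induced marginal of $Z_t$. That gives slightly more than you claim. Since $V_t$ is uniform on $\mathbb F_q$ and independent of $M$, $Z_t=(C_M(t)+V_t)\bmod q$ is uniform on $\mathbb F_q$ whatever the code. So in your route the supremum over $P_{Z_t}$ is not what makes the bound code-independent. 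You could state the converse with the fixed-input capacity $C_t(\varepsilon)$ at uniform input, which is at most $\widetilde C_t(\varepsilon)$.

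\textbf{What the paper's route buys.} It works per context and never equates a context-conditioned entropy with an unconditional one. That is the form that survives if the base law or channel depends on $x_{<t}$. In that case your identification fails, because $H(X_t\mid Z_t,X_{<t})\le H(X_t\mid Z_t)$ goes the wrong way for an upper bound. Keeping the conditioning on $X_{<t}$ in both entropies repairs it.

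\textbf{One small citation slip.} Proposition~\ref{prop:uniform-per-key-tv} proves that membership in $\mathcal F_t(\varepsilon)$ implies the per-key constraint, not the direction you use. The direction you need is the ``equivalently'' clause in the definition of uniform per-key distortion. It holds because $(C_m(t)+v)\bmod q$ ranges over all of $\mathbb F_q$ as $v$ does.
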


\begin{proof}
Since $M$ is uniform on $\mathcal M$ with $|\mathcal M|=2^k$,
\[
H(M)=k=nR_n.
\]
By Fano's inequality~\citep{CoverT2006},
\[
H(M\mid X_{1:n},V_{1:n})
\le
h_2(P_e^{(n)}) + P_e^{(n)}\log(|\mathcal M|-1)
\le
1+nR_nP_e^{(n)}.
\]
Therefore,
\[
nR_n
=
H(M)
=
I(M;X_{1:n},V_{1:n}) + H(M\mid X_{1:n},V_{1:n}),
\]
so
\[
nR_n
\le
I(M;X_{1:n},V_{1:n}) + 1+nR_nP_e^{(n)}.
\]
Because $M$ and $V_{1:n}$ are independent,
\[
I(M;X_{1:n},V_{1:n})
=
I(M;X_{1:n}\mid V_{1:n}).
\]
Hence
\[
nR_n
\le
I(M;X_{1:n}\mid V_{1:n}) + 1+nR_nP_e^{(n)}.
\]
Applying the chain rule gives
\[
I(M;X_{1:n}\mid V_{1:n})
=
\sum_{t=1}^n I(M;X_t\mid X_{1:t-1},V_{1:n}).
\]

Fix $t$. Let
\[
U_t := (X_{1:t-1},V_{1:n}).
\]
Since $Z_t$ is a deterministic function of $(M,V_t)$, and by the conditional-memoryless assumption
\[
P_{X_t\mid M,V_{1:n},X_{1:t-1}}
=
P_{X_t\mid Z_t},
\]
we have the conditional Markov chain
\[
M \longrightarrow Z_t \longrightarrow X_t
\qquad \text{given } U_t.
\]
Therefore, by data processing~\citep{CoverT2006},
\[
I(M;X_t\mid U_t)\le I(Z_t;X_t\mid U_t).
\]
Thus
\[
I(M;X_{1:n}\mid V_{1:n})
\le
\sum_{t=1}^n I(Z_t;X_t\mid U_t).
\]

Now expand the conditional mutual information:
\[
I(Z_t;X_t\mid U_t)
=
\sum_{u} P_{U_t}(u)\, I(Z_t;X_t\mid U_t=u).
\]
Fix any $u$ with $P_{U_t}(u)>0$. Conditioned on $U_t=u$, the channel from $Z_t$ to $X_t$ is still the same single-letter channel family
\[
(P_{X_t\mid Z_t=z})_{z\in\mathcal Z},
\]
while the input law of $Z_t$ becomes the posterior distribution
\[
P_{Z_t\mid U_t=u}.
\]
Because the scheme satisfies the uniform per-key $\varepsilon$-TV constraint, this channel family belongs to $\mathcal F_t(\varepsilon)$. Hence, by Definition~\ref{def:universal-Ct},
\[
I(Z_t;X_t\mid U_t=u)\le \widetilde C_t(\varepsilon).
\]
Since this holds for every $u$,
\[
I(Z_t;X_t\mid U_t)\le \widetilde C_t(\varepsilon).
\]
Therefore,
\[
I(M;X_{1:n}\mid V_{1:n})
\le
\sum_{t=1}^n \widetilde C_t(\varepsilon).
\]

Substituting back into the Fano bound yields
\[
nR_n
\le
\sum_{t=1}^n \widetilde C_t(\varepsilon) + 1+nR_nP_e^{(n)}.
\]
Dividing by $n$ gives
\[
R_n
\le
\frac1n\sum_{t=1}^n \widetilde C_t(\varepsilon)
+
\frac{1+nR_nP_e^{(n)}}{n}.
\]
This proves the finite-$n$ bound.

If $(R_n)_n$ is bounded and $P_e^{(n)}\to 0$, then the additive term on the right tends to $0$, so
\[
\limsup_{n\to\infty} R_n
\le
\limsup_{n\to\infty}\frac1n\sum_{t=1}^n \widetilde C_t(\varepsilon).
\]
\end{proof}

The following achievability result should be interpreted as an achievability theorem for the
induced single-letter channel \(P_{X|Z}\), assuming that the encoder can realize codewords
\(z^n(m)\in\mathcal Z^n\) with empirical distribution approaching \(P_Z\). A separate codebook
realizability argument is needed to specialize this statement to the constrained \simplemark or \arcmark~state
process \(Z_t(m,V_t)\).

\begin{theorem}[Achievability in the idealized induced DMC model]
\label{thm:tv-achievability}
Assume now that:

\begin{enumerate}
\item $Q_t\equiv Q$ does not depend on $t$;
\item the feasible set is the same at every time, i.e.
\[
\mathcal{F}_t(\varepsilon)\equiv \mathcal{F}(\varepsilon);
\]
\item the encoder uses a fixed channel family
\[
\bigl(P_{X\mid Z=z}\bigr)_{z\in\mathcal{Z}}\in \mathcal{F}(\varepsilon)
\]
at every time;
\item the induced state process $Z_1,\dots,Z_n$ is i.i.d. with marginal $P_Z$.
\end{enumerate}

Then any rate
\[
R < I(Z;X)
\]
is achievable with vanishing block error probability. Consequently, any rate
\[
R < C(\varepsilon)
:=
\sup_{\bigl(P_{X\mid Z=z}\bigr)_{z\in\mathcal{Z}}\in \mathcal{F}(\varepsilon)}
I(Z;X)
\]
is achievable.
\end{theorem}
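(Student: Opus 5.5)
The plan is to reduce the statement to Shannon's channel coding theorem for a discrete memoryless channel with a fixed input distribution. Under assumptions 1--3, the map $z\mapsto P_{X\mid Z=z}$ is a fixed stochastic matrix $W$ from the finite alphabet $\mathcal Z=\mathbb F_q$ to the finite alphabet $\mathcal X$. By the conditional-memoryless assumption, for any state sequence $z^n$ we have
\[
P_{X_{1:n}\mid Z_{1:n}=z^n}(x^n)=\prod_{t=1}^n W(x_t\mid z_t).
\]
So, once the encoder is viewed as choosing $z^n(m)$, the watermark behaves exactly as a DMC $W$. Feasibility is inherited from $\mathcal F(\varepsilon)$. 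By Proposition~\ref{prop:uniform-per-key-tv}, every realized $(m,v)$ then satisfies the uniform per-key $\varepsilon$-TV constraint, whatever codebook is used.

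For the first claim, I fix $P_Z$ (the i.i.d. marginal of assumption 4) and run the standard random-coding argument. I draw $2^{nR}$ codewords $Z^n(m)$ i.i.d. from $P_Z^{\otimes n}$, which is the idealized reading of assumption 4 flagged just before the theorem. The decoder uses joint typicality with respect to $P_Z W$.

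The two error events are handled in the usual way:
\begin{itemize}
\item The true codeword fails to be jointly typical with the output. This has vanishing probability by the weak law of large numbers, since pairs $(Z_t,X_t)$ are i.i.d. $P_ZW$.
\item Some other codeword is jointly typical with the output. Each such codeword has probability at most $2^{-n(I(Z;X)-3\delta)}$, because it is independent of $X^n$. A union bound over $2^{nR}$ codewords gives vanishing average error when $R<I(Z;X)-3\delta$.
\end{itemize}

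Averaging over codebooks shows that some deterministic codebook achieves vanishing average error. Expurgating the worse half of its codewords gives vanishing maximal error at the same asymptotic rate. I will cite Cover--Thomas for this step rather than reprove it.

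For the second claim, I take any $R<C(\varepsilon)$. By the definition of the supremum there is a feasible family in $\mathcal F(\varepsilon)$ with $I(Z;X)>R$ under the given $P_Z$, and the first part applies to it. If $C(\varepsilon)$ is meant to also be optimized over $P_Z$, I instead choose $P_Z$ jointly with the family. Compactness of $\prod_z\Delta(\mathcal X)$, together with continuity of $I$ as used in Proposition~\ref{prop:rate-vanishes}, shows the supremum is approached, so this choice is possible.

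The main obstacle is the interface between assumption 4 and the random-coding construction. The theorem's hypotheses describe the state process statistically, while the proof needs the freedom to pick codewords $z^n(m)$. My plan is to interpret assumption 4 precisely as the idealization stated in the preceding remark: the encoder can realize any $z^n$ with the target empirical law. Under that interpretation the key-dependent offset $V_t$ plays no role; it is absorbed, and any needed realizability lemma is deferred as the remark says. Everything else in the argument is standard.
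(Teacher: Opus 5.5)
Your proposal is correct and follows essentially the same route as the paper: treat the fixed feasible family as a DMC from $\mathcal Z$ to $\mathcal X$, draw codewords i.i.d.\ from $P_Z$, decode by joint typicality, inherit the per-key TV constraint from membership in $\mathcal F(\varepsilon)$, and take the supremum over feasible families. Your expurgation step goes slightly beyond what the paper proves. The compactness/continuity remark is unnecessary: any $R<C(\varepsilon)$ is already exceeded by some feasible family by the definition of the supremum, so no attainment argument is needed.
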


\begin{proof}
Under the stated assumptions, the induced channel from $Z^n=(Z_1,\dots,Z_n)$ to $X^n=(X_1,\dots,X_n)$ is a discrete memoryless channel with transition law
\[
P_{X^n\mid Z^n}(x^n\mid z^n)=\prod_{t=1}^n P_{X\mid Z}(x_t\mid z_t).
\]
Fix any channel family in $\mathcal{F}(\varepsilon)$ and let the input distribution of $Z$ be the induced marginal $P_Z$. Let $R<I(Z;X)$.

Next, construct a random codebook~\citep{bok:MW} by drawing
\[
z^n(1),\dots,z^n(2^{nR})
\]
independently, each with i.i.d. components distributed according to $P_Z$. To transmit message $m$, send the corresponding codeword $z^n(m)$ through the channel, producing $x^n$ according to
\[
\prod_{t=1}^n P_{X\mid Z}(x_t\mid z_t(m)).
\]
At the decoder, use jointly typical decoding: declare $\widehat m$ to be the unique index such that $(z^n(\widehat m),x^n)$ is jointly typical.

By the standard random-coding theorem for discrete memoryless channels~\citep{bok:MW}, the average probability of error over the random code ensemble tends to zero as $n\to\infty$ whenever
\[
R<I(Z;X).
\]
Hence there exists at least one deterministic codebook with block error probability tending to zero at every rate below $I(Z;X)$.

Since the chosen channel family belongs to $\mathcal{F}(\varepsilon)$, the resulting scheme satisfies the per-key TV constraint. Finally, because the argument holds for every feasible channel family, taking the supremum over all
\[
\bigl(P_{X\mid Z=z}\bigr)_{z\in\mathcal{Z}}\in\mathcal{F}(\varepsilon)
\]
shows that every rate
\[
R<C(\varepsilon)
\]
is achievable.
\end{proof}

\begin{remark}[How to instantiate the constrained channel]
A practical instantiation is to solve, at each time $t$,
\[
\max_{\bigl(P_{X_t\mid Z_t=z}\bigr)_{z\in\mathcal{Z}}}
I(Z_t;X_t)
\quad
\text{subject to}
\quad
d_{\mathrm{TV}}(P_{X_t\mid Z_t=z},Q_t)\le \varepsilon
\ \ \forall z\in\mathcal{Z}.
\]
\end{remark}

\subsection{Examples for Token-Level}

\paragraph{Concrete instantiations of Theorem~\ref{thm:correct-converse}.}
Theorem~\ref{thm:correct-converse} states that under the uniform per-key $\varepsilon$-TV constraint, any sequence of watermarking schemes with vanishing decoding error must satisfy
\[
R_n
\le
\frac1n \sum_{t=1}^n \widetilde C_t(\varepsilon)
+
\frac{1+nR_n P_e^{(n)}}{n},
\]
where
\[
\widetilde C_t(\varepsilon)
:=
\sup_{\substack{(P_{X_t|Z_t=z})_{z\in\mathcal Z}\in \mathcal F_t(\varepsilon)\\ P_{Z_t}\in \Delta(\mathcal Z)}}
I(Z_t;X_t).
\]
Thus, to instantiate the theorem concretely, one fixes a base distribution $Q_t$, chooses state-conditioned channels
\[
\bigl(P_{X_t|Z_t=z}\bigr)_{z\in\mathcal Z}
\]
satisfying
\[
d_{\mathrm{TV}}(P_{X_t|Z_t=z},Q_t)\le \varepsilon
\qquad \forall z\in\mathcal Z,
\]
and then computes the resulting mutual information $I(Z_t;X_t)$. This gives an explicit feasible single-letter quantity that can be compared against the converse bound in Theorem~\ref{thm:correct-converse}.

\paragraph{Example 1: binary symmetric perturbation.}
Let
\[
\mathcal X=\{0,1\},
\qquad
Q(0)=Q(1)=\frac12,
\]
and let the state alphabet be
\[
\mathcal Z=\{0,1\},
\qquad
P_Z(0)=P_Z(1)=\frac12.
\]
Fix $\varepsilon\in[0,\tfrac12]$ and define the conditional laws
\[
P_{X|Z=0}(0)=\frac12+\varepsilon,
\qquad
P_{X|Z=0}(1)=\frac12-\varepsilon,
\]
and
\[
P_{X|Z=1}(0)=\frac12-\varepsilon,
\qquad
P_{X|Z=1}(1)=\frac12+\varepsilon.
\]
We verify feasibility:
\[
d_{\mathrm{TV}}(P_{X|Z=0},Q)
=
\frac12\Bigl(
\bigl|\tfrac12+\varepsilon-\tfrac12\bigr|
+
\bigl|\tfrac12-\varepsilon-\tfrac12\bigr|
\Bigr)
=
\frac12(\varepsilon+\varepsilon)
=
\varepsilon,
\]
and similarly
\[
d_{\mathrm{TV}}(P_{X|Z=1},Q)=\varepsilon.
\]
Hence this channel family is feasible under the uniform per-key $\varepsilon$-TV constraint.

Now compute the induced marginal on $X$:
\[
P_X(0)
=
\frac12\Bigl(\frac12+\varepsilon\Bigr)
+
\frac12\Bigl(\frac12-\varepsilon\Bigr)
=
\frac12,
\]
and similarly
\[
P_X(1)=\frac12.
\]
Thus $X\sim \mathrm{Bernoulli}(\tfrac12)$ and
\[
H(X)=1.
\]
Moreover,
\[
H(X|Z=0)=h_2\Bigl(\frac12+\varepsilon\Bigr),
\qquad
H(X|Z=1)=h_2\Bigl(\frac12+\varepsilon\Bigr),
\]
so
\[
H(X|Z)=h_2\Bigl(\frac12+\varepsilon\Bigr).
\]
Therefore
\[
I(Z;X)
=
H(X)-H(X|Z)
=
1-h_2\Bigl(\frac12+\varepsilon\Bigr).
\]
Consequently,
\[
\widetilde C(\varepsilon)\ge 1-h_2\Bigl(\frac12+\varepsilon\Bigr).
\]
This gives a completely explicit feasible instantiation of the single-letter quantity appearing in Theorem~\ref{thm:correct-converse}.

\paragraph{Numerical specialization of Example 1.}
If $\varepsilon=0.1$, then
\[
I(Z;X)=1-h_2(0.6)\approx 0.0290 \text{ bits/token}.
\]
If $\varepsilon=0.2$, then
\[
I(Z;X)=1-h_2(0.7)\approx 0.1187 \text{ bits/token}.
\]
Thus, even in this simplest binary example, the theorem predicts that as the allowed per-key distortion budget increases, the admissible embedding rate also increases.

\paragraph{Small-$\varepsilon$ expansion for Example 1.}
Using the Taylor expansion of binary entropy around $\tfrac12$,
\[
h_2\Bigl(\frac12+\varepsilon\Bigr)
=
1-\frac{2}{\ln 2}\varepsilon^2+o(\varepsilon^2),
\]
we obtain
\[
I(Z;X)
=
1-h_2\Bigl(\frac12+\varepsilon\Bigr)
=
\frac{2}{\ln 2}\varepsilon^2+o(\varepsilon^2).
\]
Hence in this binary symmetric perturbation example, the mutual information scales quadratically in $\varepsilon$ for small $\varepsilon$.

\paragraph{Example 2: exact distortion-freeness forces zero information.}
Now set $\varepsilon=0$. The per-key TV constraint becomes
\[
d_{\mathrm{TV}}(P_{X|Z=z},Q)=0
\qquad \forall z\in\mathcal Z.
\]
Since total variation distance vanishes if and only if the distributions are identical, we obtain
\[
P_{X|Z=z}=Q
\qquad \forall z\in\mathcal Z.
\]
Therefore the conditional law of $X$ does not depend on $Z$, so $X$ and $Z$ are independent:
\[
P_{X,Z}(x,z)=P_Z(z)Q(x)=P_Z(z)P_X(x).
\]
Hence
\[
I(Z;X)=0.
\]
Thus
\[
\widetilde C_t(0)=0
\qquad \forall t,
\]
and Theorem~\ref{thm:correct-converse} implies that any asymptotically reliable watermarking scheme under exact per-key distortion-freeness must satisfy
\[
\limsup_{n\to\infty} R_n \le 0.
\]
This concretely illustrates the paper's statement that exact per-key distortion-freeness forces zero embedding capacity.

\paragraph{Example 3: ternary perturbation.}
Let
\[
\mathcal X=\{1,2,3\},
\qquad
Q(1)=Q(2)=Q(3)=\frac13,
\]
and again let
\[
\mathcal Z=\{0,1\},
\qquad
P_Z(0)=P_Z(1)=\frac12.
\]
Fix $\varepsilon\in[0,\tfrac13]$ and define
\[
P_{X|Z=0}
=
\left(\frac13+\varepsilon,\frac13,\frac13-\varepsilon\right),
\]
\[
P_{X|Z=1}
=
\left(\frac13-\varepsilon,\frac13,\frac13+\varepsilon\right).
\]
We again verify the TV constraint:
\[
d_{\mathrm{TV}}(P_{X|Z=0},Q)
=
\frac12\left(
\left|\frac13+\varepsilon-\frac13\right|
+
\left|\frac13-\frac13\right|
+
\left|\frac13-\varepsilon-\frac13\right|
\right)
=
\frac12(\varepsilon+0+\varepsilon)
=
\varepsilon,
\]
and similarly
\[
d_{\mathrm{TV}}(P_{X|Z=1},Q)=\varepsilon.
\]
Thus the channel family is feasible.

Now compute the output marginal:
\[
P_X
=
\frac12 P_{X|Z=0}+\frac12 P_{X|Z=1}
=
\left(\frac13,\frac13,\frac13\right)
=
Q.
\]
Hence
\[
H(X)=\log_2 3.
\]
Also,
\[
H(X|Z=0)=H(X|Z=1)
=
H\left(\frac13+\varepsilon,\frac13,\frac13-\varepsilon\right),
\]
so
\[
H(X|Z)
=
H\left(\frac13+\varepsilon,\frac13,\frac13-\varepsilon\right).
\]
Therefore
\[
I(Z;X)
=
\log_2 3
-
H\left(\frac13+\varepsilon,\frac13,\frac13-\varepsilon\right).
\]
This yields another explicit feasible instantiation of the constrained mutual information appearing in Theorem~\ref{thm:correct-converse}.

\paragraph{Example 4: direct \simplemark~instantiation at one time step.}
Fix a time \(t\), and suppose the base next-token distribution is
\[
Q_t = (0.5,0.3,0.2)
\]
over a three-token vocabulary \(\mathcal X=\{x_1,x_2,x_3\}\). Suppose also that the relevant
state alphabet at time \(t\) is binary:
\[
\mathcal Z_t=\{z_0,z_1\},\qquad P_{Z_t}(z_0)=P_{Z_t}(z_1)=\frac12.
\]
For some
\[
\delta\in [0,\min\{\varepsilon,0.2\}],
\]
define
\[
P_{X_t|Z_t=z_0}=(0.5+\delta,0.3,0.2-\delta),
\]
and
\[
P_{X_t|Z_t=z_1}=(0.5-\delta,0.3,0.2+\delta).
\]
Then
\[
d_{\mathrm{TV}}(P_{X_t|Z_t=z_0},Q_t)
=
\frac12\left(|\delta|+0+|-\delta|\right)
=
\delta
\le \varepsilon,
\]
and similarly
\[
d_{\mathrm{TV}}(P_{X_t|Z_t=z_1},Q_t)\le \varepsilon.
\]
Hence this is a feasible per-key channel family.

The output marginal is
\[
P_{X_t}
=
\frac12 P_{X_t|Z_t=z_0}
+
\frac12 P_{X_t|Z_t=z_1}
=
(0.5,0.3,0.2)
=
Q_t.
\]
Therefore
\[
H(X_t)=H(0.5,0.3,0.2).
\]
However, the two conditional entropies are not generally equal. Indeed,
\[
H(X_t|Z_t)
=
\frac12 H(0.5+\delta,0.3,0.2-\delta)
+
\frac12 H(0.5-\delta,0.3,0.2+\delta).
\]
Consequently, the correct mutual information is
\[
\boxed{
I(Z_t;X_t)
=
H(0.5,0.3,0.2)
-
\frac12 H(0.5+\delta,0.3,0.2-\delta)
-
\frac12 H(0.5-\delta,0.3,0.2+\delta).
}
\]
Thus, for this concrete choice of feasible perturbation, Theorem~\ref{thm:correct-converse} yields the explicit converse
\[
R_n
\le
\frac1n\sum_{t=1}^n \widetilde C_t(\varepsilon)
+
\frac{1+nR_n P_e^{(n)}}{n},
\]
with
\[
\widetilde C_t(\varepsilon)
\ge
H(0.5,0.3,0.2)
-
\frac12 H(0.5+\delta,0.3,0.2-\delta)
-
\frac12 H(0.5-\delta,0.3,0.2+\delta).
\]

\paragraph{Example 5: using the actual \simplemark~state variable.}
Recall that the state at time $t$ is
\[
Z_t = (C_M(t)+V_t) \bmod q,
\]
where $C_M(t)\in\mathbb{F}_q$ is the code symbol at time $t$ and
$V_t\sim\mathrm{Unif}(\mathbb{F}_q)$ is the secret key value. To obtain
a concrete toy instantiation, take $q=2$, so
$Z_t=(C_M(t)+V_t)\bmod 2$ takes values in $\mathcal{Z}=\{0,1\}$, the
two vertices of the $1$-simplex; write $z_0:=0$ and $z_1:=1$.
If the base distribution is
\[
Q_t=(0.5,0.3,0.2),
\]
we may choose the feasible channel family
\[
P_{X_t|Z_t=z_0}=(0.5+\delta,0.3,0.2-\delta),
\qquad
P_{X_t|Z_t=z_1}=(0.5-\delta,0.3,0.2+\delta),
\]
where $\delta\le \varepsilon$. Then
$d_{\mathrm{TV}}(P_{X_t|Z_t=z_i},Q_t)\le \varepsilon$ for
$i\in\{0,1\}$, so this is a valid per-key instantiation in the
\simplemark~notation.

If \(P_{Z_t}(z_0)=P_{Z_t}(z_1)=1/2\), then the output marginal is again
\[
P_{X_t}
=
\frac12 P_{X_t|Z_t=z_0}
+
\frac12 P_{X_t|Z_t=z_1}
=
Q_t.
\]
Therefore
\[
H(X_t)=H(0.5,0.3,0.2),
\]
while
\[
H(X_t|Z_t)
=
\frac12 H(0.5+\delta,0.3,0.2-\delta)
+
\frac12 H(0.5-\delta,0.3,0.2+\delta).
\]
Hence the correct per-time mutual information is
\[
\boxed{
I(Z_t;X_t)
=
H(0.5,0.3,0.2)
-
\frac12 H(0.5+\delta,0.3,0.2-\delta)
-
\frac12 H(0.5-\delta,0.3,0.2+\delta).
}
\]
Thus Theorem~\ref{thm:correct-converse}  says that any asymptotically reliable scheme built from such per-time perturbations
must satisfy
\[
\limsup_{n\to\infty} R_n
\le
\limsup_{n\to\infty}
\frac1n\sum_{t=1}^n \widetilde C_t(\varepsilon),
\]
and the displayed quantity above gives an explicit feasible single-letter mutual information value
that lower bounds \(\widetilde C_t(\varepsilon)\).

\paragraph{Interpretation.}
These examples show concretely how Theorem~\ref{thm:correct-converse} should be used in practice. One fixes a base next-token distribution $Q_t$, constructs feasible state-conditioned perturbations lying inside the total variation ball of radius $\varepsilon$ around $Q_t$, computes the induced single-letter mutual information $I(Z_t;X_t)$, and then interprets Theorem~\ref{thm:correct-converse} as saying that no asymptotically reliable watermarking scheme can have rate larger than the time-average of the best such single-letter quantities. In particular, as $\varepsilon\to 0$, every feasible state-conditioned law collapses back to $Q_t$, and the mutual information necessarily goes to zero.

\paragraph{Instantiating $R_n$ and $P_e^{(n)}$ in Theorem~\ref{thm:correct-converse}.}
Recall that in Theorem~\ref{thm:correct-converse}, if the message set has size
\[
|\mathcal M|=2^k,
\]
and the watermark spans $n$ time steps, then the embedding rate is
\[
R_n=\frac{k}{n}
\]
bits per token. Moreover, for a decoder producing an estimate $\widehat M$ from the observation $(X_{1:n},V_{1:n})$, the block error probability is
\[
P_e^{(n)}:=\Pr[\widehat M\neq M].
\]
Thus, to instantiate the theorem concretely, one must specify a finite blocklength $n$, a number of embedded bits $k$, and a decoding rule. 

\paragraph{Example 1: binary toy model with $k=1$ and $n=1$.}
Consider the binary construction
\[
\mathcal X=\{0,1\},
\qquad
\mathcal Z=\{0,1\},
\qquad
P_Z(0)=P_Z(1)=\frac12,
\]
with conditional laws
\[
P_{X|Z=0}(0)=\frac12+\varepsilon,
\qquad
P_{X|Z=0}(1)=\frac12-\varepsilon,
\]
\[
P_{X|Z=1}(0)=\frac12-\varepsilon,
\qquad
P_{X|Z=1}(1)=\frac12+\varepsilon.
\]
Suppose we embed one bit, so that
\[
M\in\{0,1\},
\qquad k=1,
\]
and we use a single time step:
\[
n=1.
\]
Then
\[
R_1=\frac{k}{n}=1
\]
bit per token.

Assume that the state is identified with the message, i.e.\ $Z=M$, and let the decoder be the maximum-likelihood decoder
\[
\widehat M(x)=
\begin{cases}
0,& x=0,\\
1,& x=1.
\end{cases}
\]
Then the block error probability is
\[
P_e^{(1)}=\Pr[\widehat M\neq M].
\]
Since the decoder makes an error exactly when the output flips relative to the more likely symbol under the true message,
\[
P_e^{(1)}
=
\frac12\,\Pr[\widehat M\neq 0\mid M=0]
+
\frac12\,\Pr[\widehat M\neq 1\mid M=1].
\]
Now
\[
\Pr[\widehat M\neq 0\mid M=0]
=
\Pr[X=1\mid Z=0]
=
\frac12-\varepsilon,
\]
and similarly
\[
\Pr[\widehat M\neq 1\mid M=1]
=
\Pr[X=0\mid Z=1]
=
\frac12-\varepsilon.
\]
Therefore
\[
P_e^{(1)}=\frac12-\varepsilon.
\]
So in this one-shot example,
\[
R_1=1,
\qquad
P_e^{(1)}=\frac12-\varepsilon.
\]
This illustrates that at blocklength $1$, the decoder error remains large unless $\varepsilon$ is quite large.

\paragraph{Example 2: repetition over $n$ time steps with $k=1$.}
Now use the same binary channel at every time step, but repeat the same embedded bit across $n$ independent uses. Thus
\[
M\in\{0,1\},
\qquad
k=1,
\qquad
R_n=\frac1n.
\]
Conditioned on $M=0$, the outputs $X_1,\dots,X_n$ are i.i.d.\ with
\[
\Pr[X_t=0\mid M=0]=\frac12+\varepsilon,
\qquad
\Pr[X_t=1\mid M=0]=\frac12-\varepsilon.
\]
Conditioned on $M=1$, the outputs are i.i.d.\ with the roles of $0$ and $1$ reversed.

Let the decoder use majority vote:
\[
\widehat M=
\begin{cases}
0,& \sum_{t=1}^n \mathbf 1\{X_t=1\}<\frac n2,\\
1,& \sum_{t=1}^n \mathbf 1\{X_t=1\}>\frac n2.
\end{cases}
\]
(Without loss of generality, assume that $n$ is odd.)

Then the block error probability is
\[
P_e^{(n)}=\Pr[\widehat M\neq M].
\]
By symmetry,
\[
P_e^{(n)}
=
\Pr\!\left[\sum_{t=1}^n \mathbf 1\{X_t=1\}\ge \frac n2 \,\middle|\, M=0\right].
\]
Since under $M=0$,
\[
\sum_{t=1}^n \mathbf 1\{X_t=1\}
\sim \mathrm{Bin}\!\left(n,\frac12-\varepsilon\right),
\]
we obtain
\[
P_e^{(n)}
=
\sum_{j=\lceil n/2\rceil}^{n}
\binom{n}{j}
\left(\frac12-\varepsilon\right)^j
\left(\frac12+\varepsilon\right)^{n-j}.
\]
Thus in this repetition construction,
\[
R_n=\frac1n,
\]
while
\[
P_e^{(n)}
=
\sum_{j=\lceil n/2\rceil}^{n}
\binom{n}{j}
\left(\frac12-\varepsilon\right)^j
\left(\frac12+\varepsilon\right)^{n-j}.
\]
As $n\to\infty$, this error probability tends to zero for every fixed $\varepsilon>0$, whereas the rate $R_n=1/n$ also tends to zero. This is consistent with Theorem~\ref{thm:correct-converse}.

\paragraph{Example 3: four messages over two time steps.}
Suppose now that we wish to embed
\[
k=2
\]
bits using
\[
n=2
\]
time steps. Then the message set has size
\[
|\mathcal M|=2^2=4,
\]
for example
\[
\mathcal M=\{00,01,10,11\}.
\]
The embedding rate is therefore
\[
R_2=\frac{k}{n}=\frac22=1
\]
bit per token.

To make this concrete, suppose at each time $t$ the encoder uses the binary perturbation channel from Example~1, and the two message bits are sent independently across the two time steps:
\[
M=(M_1,M_2),
\qquad
X_t \sim P_{X|Z=M_t}.
\]
Let the decoder decode each coordinate separately via
\[
\widehat M_t=
\begin{cases}
0,& X_t=0,\\
1,& X_t=1.
\end{cases}
\]
Then the overall decoded message is
\[
\widehat M=(\widehat M_1,\widehat M_2).
\]
Since an error occurs if at least one of the two coordinates is decoded incorrectly, and each coordinate has error probability
\[
\frac12-\varepsilon,
\]
the block error probability is
\[
P_e^{(2)}
=
1-\Pr[\widehat M_1=M_1 \text{ and } \widehat M_2=M_2]
=
1-\left(\frac12+\varepsilon\right)^2.
\]
Equivalently,
\[
P_e^{(2)}
=
1-\left(1-\left(\frac12-\varepsilon\right)\right)^2.
\]
So in this example,
\[
R_2=1,
\qquad
P_e^{(2)}=1-\left(\frac12+\varepsilon\right)^2.
\]

\paragraph{Example 4: general finite-blocklength instantiation.}
More generally, if one chooses:
\begin{itemize}
    \item a blocklength $n$,
    \item a message set $\mathcal M$ with $|\mathcal M|=2^k$,
    \item a feasible family of per-time channels
    \[
    \bigl(P_{X_t|Z_t=z}\bigr)_{z\in\mathcal Z},
    \]
    \item and a decoder
    \[
    \widehat M=\widehat M(X_{1:n},V_{1:n}),
    \]
\end{itemize}
then the two quantities in Theorem~\ref{thm:correct-converse} are instantiated as
\[
R_n=\frac{k}{n}
\]
and
\[
P_e^{(n)}=\Pr[\widehat M(X_{1:n},V_{1:n})\neq M].
\]
For instance, if $k=100$ bits are embedded over $n=1000$ tokens, then
\[
R_n=\frac{100}{1000}=0.1
\]
bits per token. If the chosen encoder--decoder pair achieves a block error probability of, say,
\[
P_e^{(n)}=10^{-3},
\]
then the finite-$n$ converse in Theorem~\ref{thm:correct-converse} becomes
\[
0.1
\le
\frac1{1000}\sum_{t=1}^{1000}\widetilde C_t(\varepsilon)
+
\frac{1+1000\cdot 0.1\cdot 10^{-3}}{1000}.
\]
That is,
\[
0.1
\le
\frac1{1000}\sum_{t=1}^{1000}\widetilde C_t(\varepsilon)
+
0.0011.
\]
Equivalently,
\[
\frac1{1000}\sum_{t=1}^{1000}\widetilde C_t(\varepsilon)
\ge 0.0989.
\]
So this gives a concrete numerical reading of Theorem~\ref{thm:correct-converse}: to support an embedding rate of $0.1$ bits per token at block error probability $10^{-3}$, the average constrained single-letter mutual information must be at least approximately $0.0989$.

\paragraph{Interpretation.}
In short, $R_n$ is instantiated simply by choosing how many bits $k$ are embedded across how many time steps $n$, while $P_e^{(n)}$ is instantiated by fixing a decoder and computing its block error probability under the induced joint law of $(M,X_{1:n},V_{1:n})$. Theorem~\ref{thm:correct-converse} then relates these concrete finite-blocklength quantities to the average single-letter constrained capacities $\widetilde C_t(\varepsilon)$.

\subsection{Sequence-Level}

\begin{lemma}[Sequence-level TV bound from uniform per-step TV control]
\label{lem:sequence-tv}
Fix a message $m\in\mathcal{M}$ and a realized side-information sequence $v_{1:n}\in\mathbb{F}_q^n$. Let
\[
P^{(m,v_{1:n})}_{X_{1:n}}
:=
P_{X_{1:n}\mid M=m,V_{1:n}=v_{1:n}}
\]
denote the watermarked sequence law, and let
\[
Q^{(n)}_{X_{1:n}}
:=
\prod_{t=1}^n Q_t
\]
denote the reference product law formed from the base next-token distributions.
Assume that, conditioned on $(M,V_{1:n})=(m,v_{1:n})$, the watermarked sequence law factors as
\[
P^{(m,v_{1:n})}_{X_{1:n}}(x_{1:n})
=
\prod_{t=1}^n P_{X_t\mid M=m,V_t=v_t}(x_t),
\]
and that for every $t\in[1:n]$,
\[
d_{\mathrm{TV}}\!\left(P_{X_t\mid M=m,V_t=v_t},\,Q_t\right)\le \varepsilon_t.
\]
Then
\[
d_{\mathrm{TV}}\!\left(P^{(m,v_{1:n})}_{X_{1:n}},\,Q^{(n)}_{X_{1:n}}\right)
\le
1-\prod_{t=1}^n (1-\varepsilon_t).
\]
In particular, if $\varepsilon_t\le \varepsilon$ for all $t$, then
\[
d_{\mathrm{TV}}\!\left(P^{(m,v_{1:n})}_{X_{1:n}},\,Q^{(n)}_{X_{1:n}}\right)
\le
1-(1-\varepsilon)^n
\le
n\varepsilon.
\]
\end{lemma}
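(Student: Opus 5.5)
We prove the bound with a coupling argument: total variation equals the minimal disagreement probability over couplings, and product measures can be coupled coordinate by coordinate. Throughout we use the coupling characterization
\[
d_{\mathrm{TV}}(P,Q)=\min_{\pi\in\Pi(P,Q)}\Pr_{(X,Y)\sim\pi}[X\neq Y],
\]
which holds on finite alphabets and is attained by the maximal coupling. Since $\mathcal{X}$ is finite, each coordinate pair $(P_{X_t\mid M=m,V_t=v_t},Q_t)$ admits such a maximal coupling $\pi_t$, and under $\pi_t$ we have $\Pr[X_t\neq Y_t]=d_{\mathrm{TV}}(P_{X_t\mid M=m,V_t=v_t},Q_t)\le\varepsilon_t$.

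\textbf{Step 1 (independent coupling).} Draw the pairs $(X_t,Y_t)\sim\pi_t$ independently across $t$. By the assumed factorization of $P^{(m,v_{1:n})}_{X_{1:n}}$, the first marginal of the joint law is exactly $P^{(m,v_{1:n})}_{X_{1:n}}$. The second marginal is $\prod_t Q_t=Q^{(n)}_{X_{1:n}}$. So this is a valid coupling of the two sequence laws.

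\textbf{Step 2 (bound the disagreement).} By independence,
\[
\Pr[X_{1:n}\neq Y_{1:n}]=1-\prod_{t=1}^n\Pr[X_t=Y_t]\le 1-\prod_{t=1}^n(1-\varepsilon_t).
\]
The easy direction of the coupling characterization, $d_{\mathrm{TV}}\le\Pr[X\neq Y]$ for any coupling, then gives the main bound. That direction follows in one line from $|P(A)-Q(A)|=|\Pr[X\in A]-\Pr[Y\in A]|\le\Pr[X\neq Y]$.

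\textbf{Step 3 (the uniform specialization).} If $\varepsilon_t\le\varepsilon$ for all $t$, monotonicity gives $1-\prod_t(1-\varepsilon_t)\le 1-(1-\varepsilon)^n$. Bernoulli's inequality $(1-\varepsilon)^n\ge 1-n\varepsilon$ for $\varepsilon\in[0,1]$ then yields the bound $n\varepsilon$. Alternatively, a union bound on $\Pr[\exists t: X_t\neq Y_t]$ gives this last step directly.

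The only real input is the existence of the maximal coupling, a standard fact: put mass $\min(P(x),Q(x))$ on the diagonal and couple the normalized residuals arbitrarily. We also note that the lemma compares against the product reference $\prod_t Q_t$ with $Q_t$ treated as fixed, so the adaptive dependence of $Q_t$ on $x_{<t}$ does not enter here.
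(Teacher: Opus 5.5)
Your proposal is correct and is essentially the paper's proof: a maximal coupling for each coordinate, taken independently across $t$, gives $\Pr[X_{1:n}=Y_{1:n}]=\prod_t\Pr[X_t=Y_t]\ge\prod_t(1-\varepsilon_t)$, and this is followed by the easy direction of the coupling inequality and then monotonicity plus Bernoulli's inequality. As in the paper, multiplying the bounds $\Pr[X_t=Y_t]\ge 1-\varepsilon_t$ tacitly uses $\varepsilon_t\le 1$, which is the only meaningful regime since $d_{\mathrm{TV}}\le 1$.
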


\begin{proof}
For each $t$, write
\[
P_t := P_{X_t\mid M=m,V_t=v_t},
\qquad
Q_t := Q_t.
\]
By assumption,
\[
P^{(m,v_{1:n})}_{X_{1:n}} = \bigotimes_{t=1}^n P_t,
\qquad
Q^{(n)}_{X_{1:n}} = \bigotimes_{t=1}^n Q_t.
\]

For each $t$, by the coupling characterization of total variation distance, there exists a coupling
\[
\pi_t \in \Pi(P_t,Q_t)
\]
of random variables $(Y_t,Z_t)$ such that
\[
\Pr[Y_t\neq Z_t] = d_{\mathrm{TV}}(P_t,Q_t)\le \varepsilon_t.
\]
Take these couplings independently across $t$, producing a joint coupling of
\[
Y_{1:n}=(Y_1,\dots,Y_n)\sim \bigotimes_{t=1}^n P_t
\]
and
\[
Z_{1:n}=(Z_1,\dots,Z_n)\sim \bigotimes_{t=1}^n Q_t.
\]
Under this product coupling,
\[
\Pr[Y_{1:n}=Z_{1:n}]
=
\prod_{t=1}^n \Pr[Y_t=Z_t]
=
\prod_{t=1}^n \bigl(1-\Pr[Y_t\neq Z_t]\bigr)
\ge
\prod_{t=1}^n (1-\varepsilon_t).
\]
Therefore,
\[
\Pr[Y_{1:n}\neq Z_{1:n}]
=
1-\Pr[Y_{1:n}=Z_{1:n}]
\le
1-\prod_{t=1}^n (1-\varepsilon_t).
\]
Applying again the coupling characterization of total variation distance gives
\[
d_{\mathrm{TV}}\!\left(\bigotimes_{t=1}^n P_t,\ \bigotimes_{t=1}^n Q_t\right)
\le
\Pr[Y_{1:n}\neq Z_{1:n}]
\le
1-\prod_{t=1}^n (1-\varepsilon_t).
\]
This proves the first claim.

For the second claim, if $\varepsilon_t\le \varepsilon$ for all $t$, then
\[
1-\prod_{t=1}^n (1-\varepsilon_t)
\le
1-(1-\varepsilon)^n.
\]
Finally, Bernoulli's inequality implies
\[
(1-\varepsilon)^n \ge 1-n\varepsilon
\qquad \text{for } \varepsilon\in[0,1],
\]
and hence
\[
1-(1-\varepsilon)^n \le n\varepsilon.
\]
This proves the uniform bound.
\end{proof}

\begin{corollary}[Uniform sequence-level TV guarantee]
\label{cor:uniform-sequence-tv}
Assume the hypotheses of Lemma~\ref{lem:sequence-tv}, and suppose moreover that the scheme is uniformly per-key $\varepsilon$-TV distortion-free, i.e.
\[
d_{\mathrm{TV}}
\bigl(
P_{X_t\mid M=m,V_t=v},
Q_t
\bigr)\le \varepsilon
\qquad
\forall t,\ \forall m,\ \forall v.
\]
Then for every message $m$ and every realized side-information sequence $v_{1:n}$,
\[
d_{\mathrm{TV}}
\!\left(
P_{X_{1:n}\mid M=m,V_{1:n}=v_{1:n}},
\ \prod_{t=1}^n Q_t
\right)
\le
1-(1-\varepsilon)^n
\le
n\varepsilon.
\]
\end{corollary}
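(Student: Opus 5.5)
This corollary follows by a direct specialization of Lemma~\ref{lem:sequence-tv}, so the plan is simply to verify its hypotheses and choose the per-step budgets $\varepsilon_t$.

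Fix an arbitrary message $m\in\mathcal{M}$ and an arbitrary realized side-information sequence $v_{1:n}\in\mathbb{F}_q^n$. The factorization hypothesis of Lemma~\ref{lem:sequence-tv} is assumed outright in the corollary. It is also exactly what the conditional-memoryless standing assumption provides, since $P_{X_t\mid M=m,V_t=v_t}=P_{X_t\mid Z_t=z_t(m,v_t)}$.

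Next, I would set $\varepsilon_t:=\varepsilon$ for every $t\in[1:n]$. The uniform per-key hypothesis is stated for all $t$, all $m$ and all $v\in\mathbb{F}_q$. Applying it with $v=v_t$ gives $d_{\mathrm{TV}}(P_{X_t\mid M=m,V_t=v_t},Q_t)\le\varepsilon=\varepsilon_t$ for each $t$. This is the per-step hypothesis of the lemma, and it could equally be obtained from Proposition~\ref{prop:uniform-per-key-tv} when the encoder uses channel families in $\mathcal{F}_t(\varepsilon)$.

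Invoking the ``in particular'' clause of Lemma~\ref{lem:sequence-tv} then yields
\[
d_{\mathrm{TV}}\!\left(P_{X_{1:n}\mid M=m,V_{1:n}=v_{1:n}},\prod_{t=1}^n Q_t\right)\le 1-(1-\varepsilon)^n\le n\varepsilon,
\]
where the last step is Bernoulli's inequality as in the lemma. Since $m$ and $v_{1:n}$ were arbitrary, the bound holds uniformly.

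There is no genuine obstacle here. The only point needing care is that the uniform hypothesis must be applied pointwise at the specific realized values $v_t$, rather than in an averaged-over-keys sense. This is exactly why the per-key (rather than average-key) formulation is what makes the sequence-level guarantee hold for every realized key.
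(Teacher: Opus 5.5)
Your proposal is correct and matches the paper's proof, which is the single line ``apply Lemma~\ref{lem:sequence-tv} with $\varepsilon_t=\varepsilon$ for all $t$.'' Your extra checks are harmless: the factorization hypothesis is assumed, and the uniform per-key bound is applied pointwise at each realized $v_t$. As in the lemma itself, the final step implicitly uses $\varepsilon\in[0,1]$, which is needed for Bernoulli's inequality.
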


\begin{proof}
Apply Lemma~\ref{lem:sequence-tv} with $\varepsilon_t=\varepsilon$ for all $t$.
\end{proof}
\section{Distortion-Freeness: Per-Key vs. On-Average}

In \arcmark~\citep{gilani2026arcmarkmultibitllmwatermark}, they consider
distortion-freeness, averaged over the secret key. For many applications,
this is not
an acceptable level of security since LLM watermarking of images/text 
is done for a particular secret key.

\begin{proposition}
The perfect distortion-free condition averaged over the secret key,
$\mathbb{E}_{S}\!\left[P_{X\mid S,M=m}\right]=Q_X
\quad \forall m,$
does not imply per-key perfect distortion-freeness, namely
$P_{X\mid S=s,M=m}=Q_X \quad \forall s,m.$
\end{proposition}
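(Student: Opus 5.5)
Since the statement is a non-implication, the plan is to exhibit one explicit counterexample: a key distribution, a message set, and a channel family $P_{X\mid S,M}$ such that averaging over the key returns exactly $Q_X$ for every message, yet some realized key $s$ gives $P_{X\mid S=s,M=m}\neq Q_X$. The natural candidate is the \simplemark-style shift construction already used in Examples 1 and 5 of Appendix~\ref{sec:capacity}. There the uniform key induces a uniform state $Z=(C_M+V)\bmod q$, and the symmetric perturbations around $Q$ average back to $Q$.

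Concretely, take $\mathcal X=\{0,1\}$ with $Q_X(0)=Q_X(1)=\tfrac12$, a single message (or $\mathcal M=\{0,1\}$), and a key $S=V\sim\mathrm{Unif}(\{0,1\})$. Fix $\delta\in(0,\tfrac12]$ and set
\[
P_{X\mid S=s,M=m}(x)=\tfrac12+\delta\quad\text{if } x=(m+s)\bmod 2,
\]
with $\tfrac12-\delta$ on the other symbol. The proof then has two steps.

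The first step checks average-key distortion-freeness. Fix $m$. As $s$ ranges uniformly over $\{0,1\}$, the target symbol $(m+s)\bmod 2$ also ranges uniformly over $\{0,1\}$, since shifting by $m$ is a bijection. Hence $\mathbb E_S[P_{X\mid S,M=m}](x)=\tfrac12(\tfrac12+\delta)+\tfrac12(\tfrac12-\delta)=\tfrac12=Q_X(x)$ for every $x$. This is the same computation as the output-marginal calculation in Example 1.

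The second step shows that per-key distortion-freeness fails. For any $s,m$, we get $d_{\mathrm{TV}}(P_{X\mid S=s,M=m},Q_X)=\delta>0$, so $P_{X\mid S=s,M=m}\neq Q_X$. Choosing $\delta=\tfrac12$ makes the failure as sharp as possible: each realized-key law is a point mass, so its total variation distance from $Q_X$ is $\tfrac12$. By the distinguisher characterization in Section~\ref{sec:prelims}, an adversary who knows or fixes the key then has advantage $\tfrac12$, even though the scheme is perfectly distortion-free on average.

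There is no real obstacle here. The only care needed is to state precisely which quantifiers are being negated: the averaged condition holds for all $m$, while the per-key condition fails for some, and in fact every, $(s,m)$. Optionally, the same argument generalizes to any $q$ and any base $Q$ via the \simplemark tilt, because a uniform $V$ makes $Z$ uniform. However, this requires the bins $\{x:\sigma(x)=z\}$ to be arranged so the tilts average to $Q$, so I would present only the binary instance as the proof.
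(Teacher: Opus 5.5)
Your counterexample is correct and is essentially the paper's. The paper also takes $\mathcal X=\{0,1\}$ with uniform $Q_X$ and a uniform key on $\{0,1\}$, and assigns the point masses $\delta_0,\delta_1$ to the two keys for every message; this is your construction at $\delta=\tfrac12$ without the message-dependent shift, and your averaging and per-key checks match the paper's verification.
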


\begin{proof}
Consider the binary alphabet $\mathcal X=\{0,1\}$ and let
$Q_X(0)=Q_X(1)=\tfrac12$.
Let the secret key $S$ be uniform on $\{0, 1\}$:
$\Pr[S=0]=\Pr[S=1]=\tfrac12.$
For every message $m$, define
$P_{X\mid S=0,M=m}(0)=1,P_{X\mid S=0,M=m}(1)=0,$
and
$P_{X\mid S=1,M=m}(0)=0,P_{X\mid S=1,M=m}(1)=1.$
Then
$\mathbb{E}_{S}\!\left[P_{X\mid S,M=m}(0)\right]
=
\tfrac12\cdot 1+\tfrac12\cdot 0
=
\tfrac12
=
Q_X(0),$
and similarly
$\mathbb{E}_{S}\!\left[P_{X\mid S,M=m}(1)\right]
=
\tfrac12\cdot 0+\tfrac12\cdot 1
=
\tfrac12
=
Q_X(1).$
Hence
$\mathbb{E}_{S}\!\left[P_{X\mid S,M=m}\right]=Q_X
\quad \forall m.$
However,
$P_{X\mid S=0,M=m}=\delta_0 \neq Q_X,
P_{X\mid S=1,M=m}=\delta_1 \neq Q_X.$
Thus per-key perfect distortion-freeness fails.
\end{proof}

\subsection{Separation in Detection}

We show that schemes that 
provide average-key distortion-free guarantees 
could be strictly less covert against a detector/adversary than the class of uniformly per-key \(\varepsilon\)-TV distortion-free schemes. 
In particular, for every \(\varepsilon < 1\), there exist average-key distortion-free schemes that are strictly easier to detect than any scheme obeying the uniform per-key \(\varepsilon\)-TV constraint.

In the proposition below, the
detector explicitly receives the secret key. In a real-world
deployment, this is (likely) not the case. But we use the
proposition to show the separation in detection schemes that
satisfy average-key distortion freeness versus per-key
distortion freeness.

\begin{proposition}[Average-key distortion-freeness can be highly detectable]
\label{prop:average-key-detectable}
Let \(\mathcal X=\mathcal V=\{1,\ldots,N\}\), and let
\(Q(x)=\frac1N, x\in\mathcal X.\)
Let the secret key \(V\sim \mathrm{Unif}(\mathcal V)\). For each realized key value \(v\in\mathcal V\),
define the keyed output distribution
\(P_{X|V=v}=\delta_v.\)
Then the following holds:

\begin{enumerate}
    \item The scheme is perfectly distortion-free on average over the key:
    \(\mathbb E_V[P_{X|V}] = Q.\)

    \item The scheme is far from per-key distortion-free. For every \(v\in\mathcal V\),
    \(d_{\mathrm{TV}}(P_{X|V=v},Q) = d_{\mathrm{TV}}(\delta_v,Q) =
    1-\frac1N. \)

    \item If the detector observes both \(X\) and the realized key \(V\), then the null law
    \( H_0:\quad V\sim \mathrm{Unif}(\mathcal V),\quad X\sim Q,\quad X\perp V \)
    and the watermarked law
    \(H_1:\quad V\sim \mathrm{Unif}(\mathcal V),\quad X=V\)
    have total variation distance
    \(d_{\mathrm{TV}}(P_{X,V}^{H_1},P_{X,V}^{H_0})
    =
    1-\frac1N.\)
    In particular, the test
    \(\Phi(x,v)=\mathbf 1\{x=v\}\)
    satisfies
    \[
    \mathbb P_{H_1}[\Phi(X,V)=1]=1,
    \quad
    \mathbb P_{H_0}[\Phi(X,V)=1]=\frac1N,
    \]
    and hence has distinguishing advantage \(1-1/N\).
\end{enumerate}
Consequently, perfect average-key distortion-freeness does not prevent strong detectability once the
realized key is used by the detector. Moreover, for every \(\varepsilon<1\), choosing \(N\) large enough
so that
\(1-\frac1N>\varepsilon\)
gives an average-key perfectly distortion-free scheme whose keyed detection advantage exceeds the
per-key TV budget \(\varepsilon\).
\end{proposition}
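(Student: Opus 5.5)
The plan is to verify the three items directly from the definitions, since the construction is fully explicit, and then read off the final consequence.

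\textbf{Item 1 (average-key distortion-freeness).} For each $x\in\mathcal X$, I will compute
\[
\mathbb E_V[P_{X|V}](x)=\sum_{v}\tfrac1N\,\delta_v(x)=\tfrac1N=Q(x).
\]

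\textbf{Item 2 (per-key distance).} Using $d_{\mathrm{TV}}=\tfrac12\sum_x|P(x)-Q(x)|$, the comparison of $\delta_v$ with $Q$ has two parts:
\begin{itemize}
\item at $x=v$ the contribution is $1-\tfrac1N$;
\item the $N-1$ other points contribute $(N-1)\cdot\tfrac1N=1-\tfrac1N$ in total.
\end{itemize}
Halving the sum $2(1-\tfrac1N)$ gives $1-\tfrac1N$.

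\textbf{Item 3 (detection when the key is observed).} I will write both joint laws on $\mathcal X\times\mathcal V$:
\begin{itemize}
\item under $H_0$, $P^{H_0}(x,v)=1/N^2$ for every pair;
\item under $H_1$, $P^{H_1}(x,v)=\tfrac1N\mathbf 1\{x=v\}$.
\end{itemize}
The cleanest route is the variational form $d_{\mathrm{TV}}=\sup_A|P^{H_1}(A)-P^{H_0}(A)|$, taking $A=\{x=v\}$ as the diagonal. This gives $P^{H_1}(A)=1$ and $P^{H_0}(A)=N\cdot\tfrac1{N^2}=\tfrac1N$, so the advantage on $A$ is $1-\tfrac1N$. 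To show this set is optimal, I will note that $P^{H_1}\ge P^{H_0}$ exactly on the diagonal, and the supremum is attained at $\{P^{H_1}>P^{H_0}\}$. Alternatively, I can sum absolute differences directly:
\[
\tfrac12\Bigl[N\bigl(\tfrac1N-\tfrac1{N^2}\bigr)+(N^2-N)\tfrac1{N^2}\Bigr]=1-\tfrac1N.
\]

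The stated acceptance probabilities of $\Phi=\mathbf 1\{x=v\}$ are exactly $P^{H_1}(A)=1$ and $P^{H_0}(A)=\tfrac1N$. Their difference is the advantage, and it matches the distinguishing-advantage characterization of TV from Section~\ref{sec:prelims}.

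\textbf{Final consequence.} Given $\varepsilon<1$, I will pick $N>1/(1-\varepsilon)$, so that $1-\tfrac1N>\varepsilon$.

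\textbf{Comparison with per-key schemes.} For contrast, any uniformly per-key $\varepsilon$-TV scheme has keyed advantage at most $\varepsilon$. This follows because the joint TV between $(X,V)$ under $H_1$ and $H_0$, with the same uniform key marginal, equals $\mathbb E_V[d_{\mathrm{TV}}(P_{X|V},Q)]\le\varepsilon$. I will include this one-line conditioning identity to justify the "strictly easier to detect" phrasing.

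There is no real obstacle. The only step needing care is justifying that the diagonal set attains the supremum in Item 3, which the explicit $\ell_1$ sum settles.
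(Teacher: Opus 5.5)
Your proposal is correct and follows essentially the same direct verification as the paper: the same averaging computation for item 1, the same split of the $\ell_1$ sum into $x=v$ and the other $N-1$ points for item 2, and for item 3 the diagonal/off-diagonal sum (the paper's only route, which you list as your alternative) followed by the acceptance probabilities of $\Phi$. Your extra identity $d_{\mathrm{TV}}(P^{H_1}_{X,V},P^{H_0}_{X,V})=\mathbb E_V[d_{\mathrm{TV}}(P_{X|V},Q)]$ (valid when the key marginal is shared) is correct, and it proves the bound on keyed advantage for per-key $\varepsilon$-TV schemes, which the paper only asserts informally.
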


\begin{proof}
First, for each \(x\in\mathcal X\),
\(\mathbb E_V[P_{X|V}(x)]
=
\frac1N\sum_{v=1}^N \delta_v(x)
=
\frac1N
=
Q(x).\)
Therefore
\(\mathbb E_V[P_{X|V}]=Q,\)
so the scheme is perfectly distortion-free on average over the key.

Next, fix \(v\in\mathcal V\). Since \(P_{X|V=v}=\delta_v\) and \(Q\) is uniform,
\(d_{\mathrm{TV}}(\delta_v,Q)
=
\frac12\sum_{x=1}^N |\delta_v(x)-Q(x)|.
\)
The term \(x=v\) contributes
\(
\left|1-\frac1N\right|=1-\frac1N,
\)
while the remaining \(N-1\) terms contribute
\(
\sum_{x\neq v}\left|0-\frac1N\right|
=
\frac{N-1}{N}
=
1-\frac1N.
\)
Thus
\(
d_{\mathrm{TV}}(\delta_v,Q)
=
\frac12\left(1-\frac1N+1-\frac1N\right)
=
1-\frac1N.
\)

It remains to compute the joint distinguishability: Under \(H_1\),
\[
P_{X,V}^{H_1}(x,v)
=
\begin{cases}
\frac1N, & x=v,\\
0, & x\neq v.
\end{cases}
\]
Under \(H_0\), \(X\) and \(V\) are independent uniforms, so
\(
P_{X,V}^{H_0}(x,v)=\frac1{N^2}
\quad \forall x,v.
\)
Therefore
\[
d_{\mathrm{TV}}(P_{X,V}^{H_1},P_{X,V}^{H_0})
=
\frac12
\sum_{v=1}^N
\sum_{x=1}^N
\left|
P_{X,V}^{H_1}(x,v)-P_{X,V}^{H_0}(x,v)
\right|.
\]
For the \(N\) diagonal pairs \(x=v\), the absolute difference is
\(
\left|\frac1N-\frac1{N^2}\right|
=
\frac{N-1}{N^2}.
\)
For the \(N^2-N\) off-diagonal pairs \(x\neq v\), the absolute difference is
\(
\left|0-\frac1{N^2}\right|
=
\frac1{N^2}.
\)
Hence
\(
d_{\mathrm{TV}}(P_{X,V}^{H_1},P_{X,V}^{H_0})
=
\frac12
\left[
N\cdot \frac{N-1}{N^2}
+
(N^2-N)\cdot \frac1{N^2}
\right].
\)
Simplifying,
\[
d_{\mathrm{TV}}(P_{X,V}^{H_1},P_{X,V}^{H_0})
=
\frac12
\left[
\frac{N-1}{N}
+
\frac{N-1}{N}
\right]
=
1-\frac1N.
\]
Finally, for the test
\(
\Phi(x,v)=\mathbf 1\{x=v\},
\)
we have under \(H_1\) that \(X=V\) always, so
\(
\mathbb P_{H_1}[\Phi(X,V)=1]=1.
\)
Under \(H_0\), \(X\) and \(V\) are independent uniform random variables on \(\{1,\ldots,N\}\), so
\(
\mathbb P_{H_0}[\Phi(X,V)=1]
=
\sum_{v=1}^N \mathbb P[V=v]\mathbb P[X=v]
=
N\cdot \frac1N\cdot \frac1N
=
\frac1N.
\)
Thus the test has distinguishing advantage
\(
1-\frac1N.
\)
This proves the proposition.
\end{proof}

\section{Additional Experiments}
\label{sec:additional-experiments}
We provide additional experimental results. Unless specified in the experiment, the setting is identical to that of Section~\ref{sec:experiments}.

\subsection{Long-form decoding}
We experiment with decoding lower $\varepsilon$-budgets under a 500-1000 token length sequence. \simplemark's decoder is able to recover the messages quite accurately in this regime. We use the Qwen-3.5-9B-Base model. The results are shown in Figure~\ref{fig:LF}.

\begin{figure}
    \centering
    \includegraphics[width=1.0\linewidth]{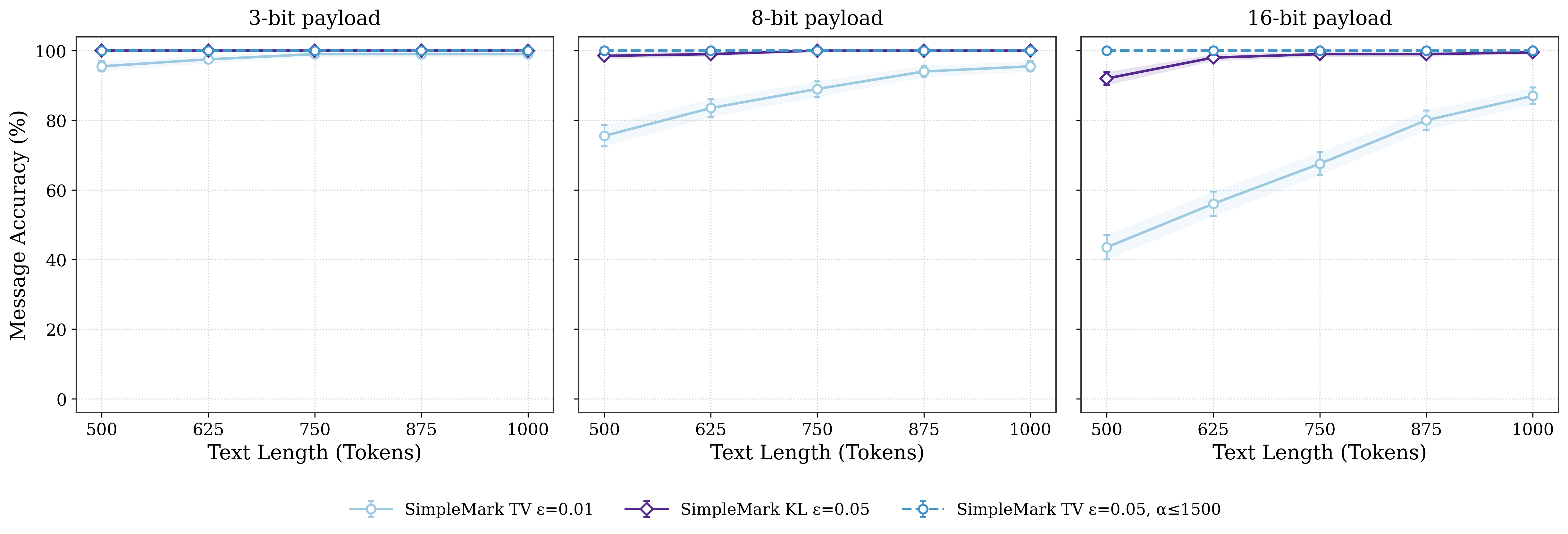}
    \caption{Message recovery for a variable length prefix of a 500-1000 token sequence at various $\varepsilon$ budgets.}
    \label{fig:LF}
\end{figure}

\subsection{Llama3.1-8B-Instruct Reasoning}
We include the Llama3.1-8B-Instruct reasoning results here. All results are consistent with the unwatermarked output in a TV or KL $\varepsilon = 0.05$ decodable regime. AIME was omitted from the table since out of the 2024 and 2025 exams, no model configuration got over 2 problems correct.

\begin{table}[t]
\centering
\small
\setlength{\tabcolsep}{4.5pt}
\begin{tabular}{lccc}
\toprule
Scheme & GSM8K ($n{=}300$) & GPQA-D ($n{=}198$) & LCB ($n{=}300$) \\
\midrule
\multicolumn{4}{l}{\textit{3-bit payload}} \\
SimpleMark TV ($\varepsilon{=}0.05$, $\alpha_{\max}{=}1500$) & 253 & 58 & 62 \\
SimpleMark KL ($\varepsilon{=}0.05$, $\alpha_{\max}{=}1500$) & 249 & 42 & 54 \\
BiMark & 251 & 45 & 58 \\
ArcMark & 249 & 54 & 65 \\
Unwatermarked & 249 & 56 & 60 \\
\midrule
\multicolumn{4}{l}{\textit{8-bit payload}} \\
SimpleMark TV ($\varepsilon{=}0.05$, $\alpha_{\max}{=}1500$) & 252 & 55 & 60 \\
SimpleMark KL ($\varepsilon{=}0.05$, $\alpha_{\max}{=}1500$) & 247 & 61 & 62 \\
BiMark & 248 & 53 & 55 \\
ArcMark & 259 & 56 & 57 \\
Unwatermarked & 249 & 56 & 60 \\
\bottomrule
\end{tabular}
\caption{Reasoning performance on several tasks for Llama3.1-8B-Instruct (temperature 0.6) on GSM8K (300 problems$\times$2 independent generations), GPQA-Diamond (198 problems), and LiveCodeBench (300 problems, graded on public test cases only, so counts are only comparable across schemes). We use identical tasks and datasets for every benchmark. We evaluate SimpleMark at two operating points ($q{=}53$, TV or KL $\varepsilon = 0.05$, cap $\alpha_{\max}{=}1500$), BiMark, ArcMark, and no watermark.}
\label{tab:reasoning_task_accuracy_llama}
\end{table}

\subsection{Robustness to token substitution}
Figure~\ref{fig:robust} depicts the message decoding accuracy for a 300 token sequence when a $\%$ of the tokens are substituted with a random word in the vocabulary. As expected, the message recovery exponentially degrades as the substitution \% increases.

\begin{figure}
    \centering
    \includegraphics[width=1.0\linewidth]{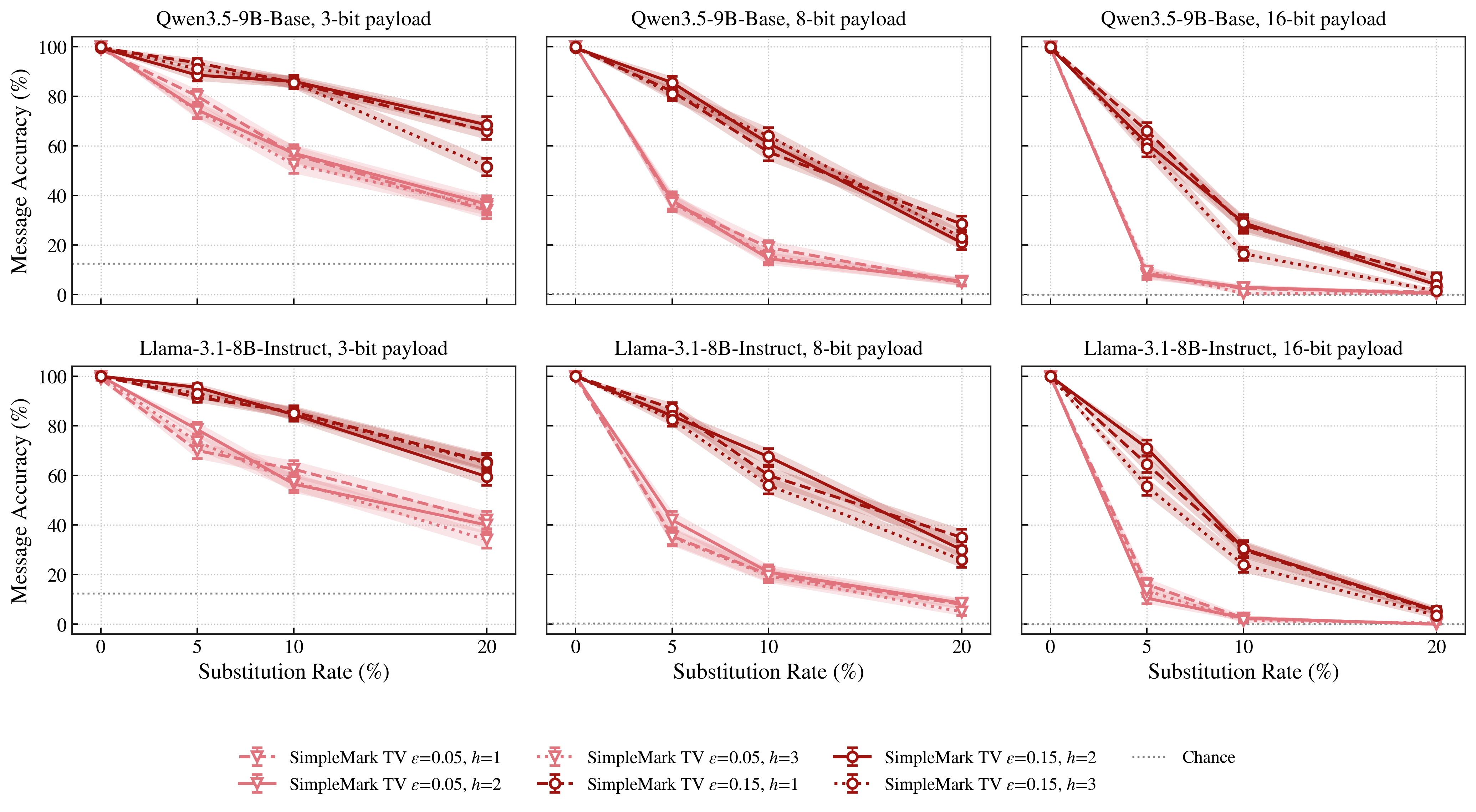}
    \caption{Substitution for a variable percentage of tokens in a 300 token sequence. For $\varepsilon \in \{0.05, 0.15\}$, and context hashing window $h = 2,3$. \simplemark is decently robust to substitution edit attacks.}
    \label{fig:robust}
\end{figure}

\subsection{Zero-Bit Detection}

Figure~\ref{fig:zerobit} reports the true positive rate of zero-bit
detection as a function of text length at fixed false positive rates
of $10^{-2}$, $10^{-6}$, and $10^{-12}$, on Qwen3.5-9B on the top and Llama3.1-8B Instruct on the bottom. Rapid zero-bit detection is not an unambiguous virtue: the keyed
detector's power is fueled by the same distributional deviation an
adversary can exploit, and the baselines detect quickly precisely
because their realized per-token deviations are large (46--65\%
distinguishing advantage, Figure~\ref{fig:TV}), far in excess of what
reliable detection requires. 
\simplemark instead caps the keyed per-token distinguishing advantage by $\varepsilon$.

\begin{figure}
    \centering
    \includegraphics[width=1.0\linewidth]{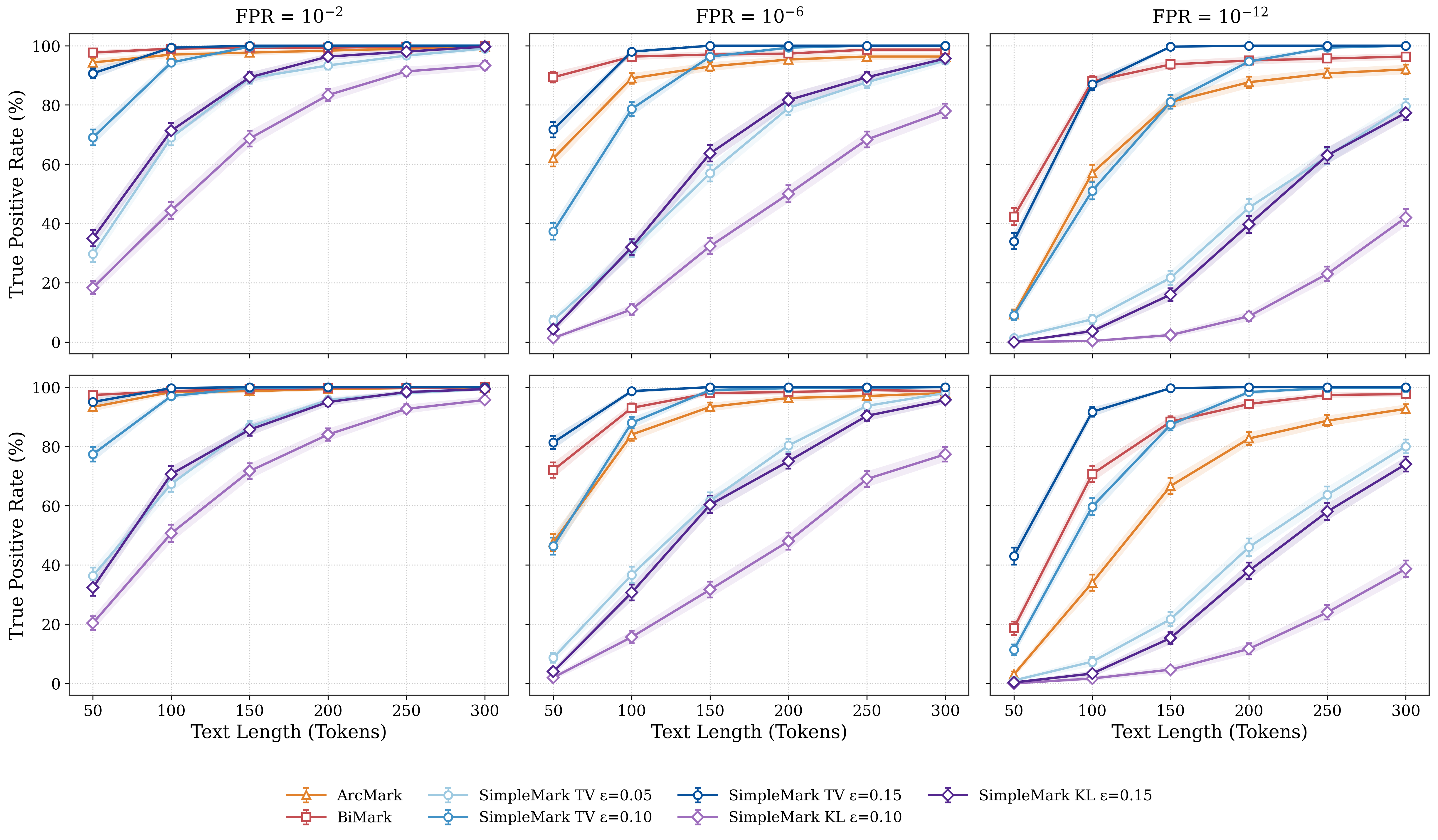}
    \caption{Zero-bit detection for \simplemark, \arcmark, and Bimark at different FPR. At higher distortion budgets we see overall improvement on the baselines for detection rate. For lower distortion budgets we stay closer to the unwatermarked distribution, so naturally we see lower detection rates.}
    \label{fig:zerobit}
\end{figure}

\subsection{Ablation on different field, token sampling, and simplex size $q$}

Throughout the paper we restricted \simplemark's simplex, LDPC code field size, and top-$k$ sampling all to 53. In this section we rerun experiments with a smaller parameter of 29. We include message decoding~\ref{fig:q_ab_deco}, perplexity~\ref{fig:q_ab_perp}, and GSM8K reasoning~\ref{fig:q_ab_gm} baselines with $\alpha_{\max} \in \{200, 500, 1000\}$. All experiments are exactly the same, besides the simplex, LDPC code, and top-$k$ sampling size. We use Qwen-3.5-9B-Base for the reasoning and embed a 16-bit payload.

\begin{figure}
    \centering
    \includegraphics[width=1.0\linewidth]{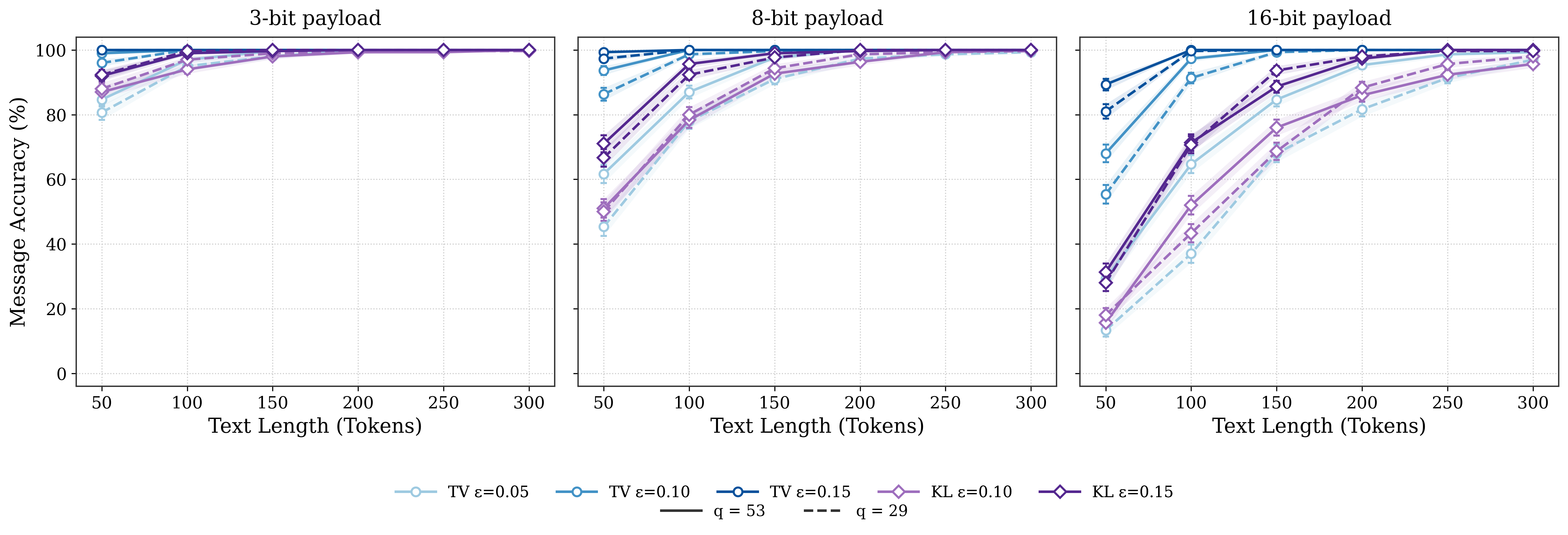}
    \caption{Message recovery for a variable length prefix of a 300 token sequence for $q = \{29, 53\}$ at various $\varepsilon$ budgets.}
    \label{fig:q_ab_deco}
\end{figure}

\begin{figure}
    \centering
    \includegraphics[width=0.5\linewidth]{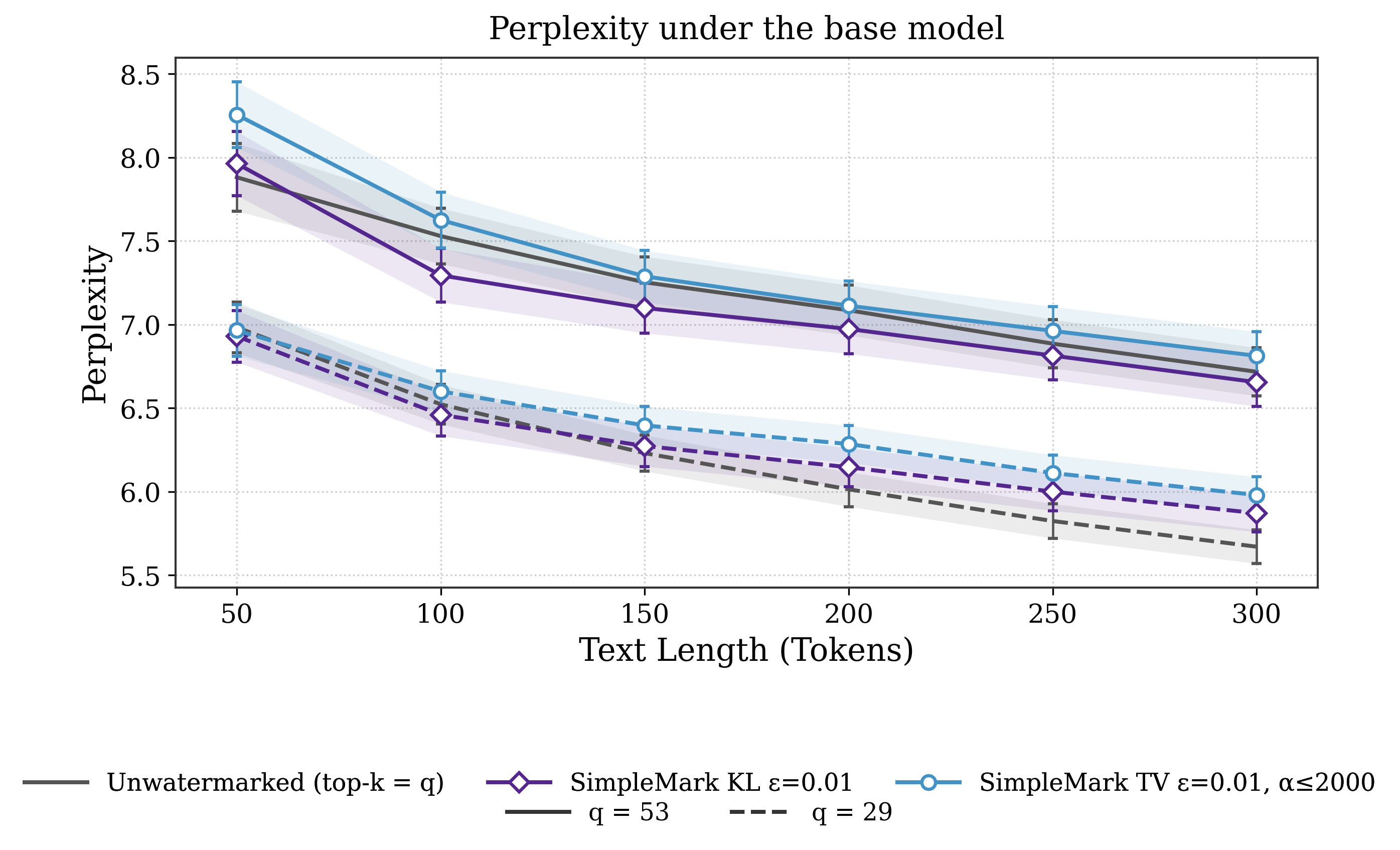}
    \caption{Perplexity with different parameters for $q = \{29, 53\}$. Both remain close to the unwatermarked baselines.}
    \label{fig:q_ab_perp}
\end{figure}

\begin{figure}
    \centering
    \includegraphics[width=0.5\linewidth]{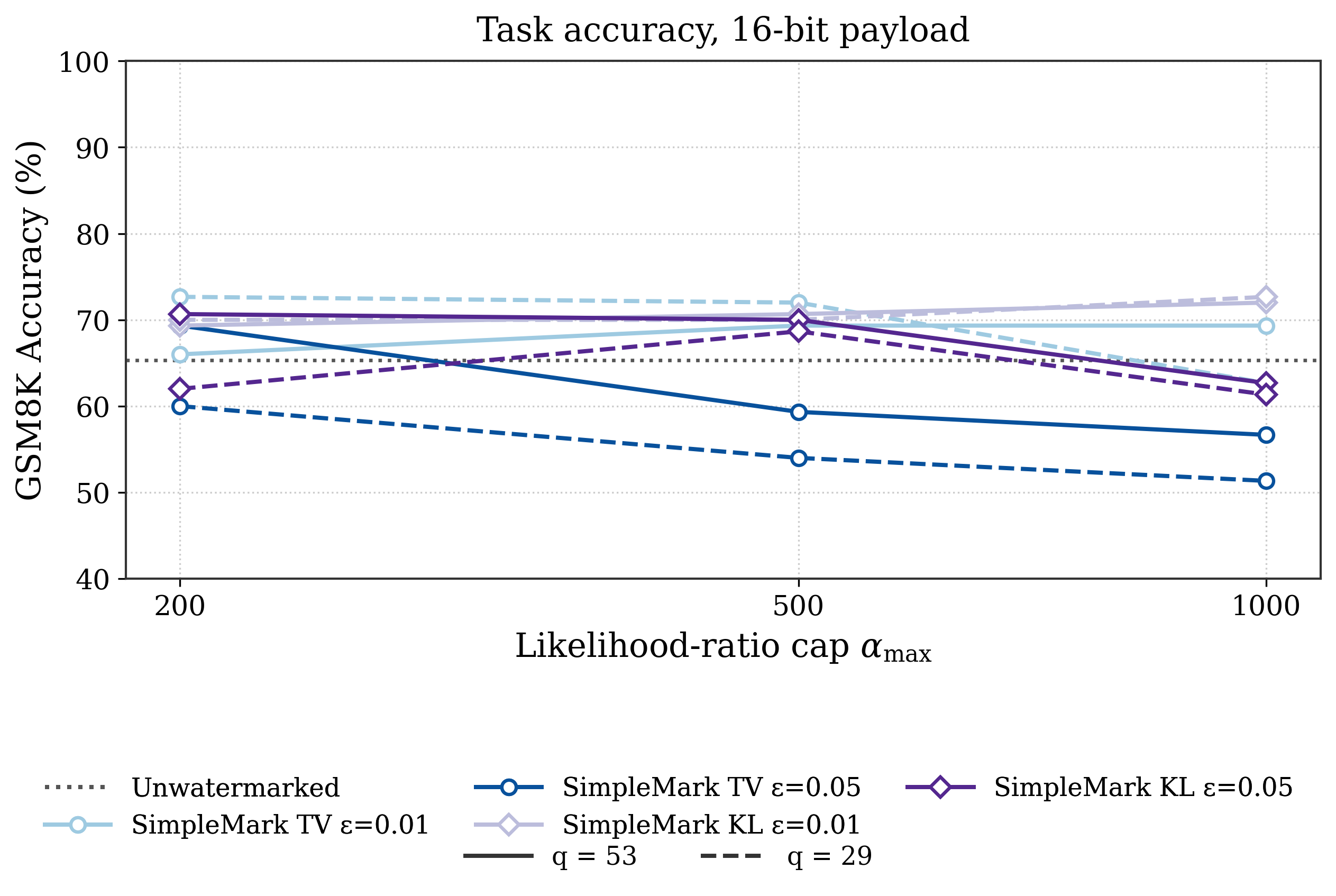}
    \caption{Reasoning for \simplemark under the  different parameter sizes and $\alpha_{\max}$.}
    \label{fig:q_ab_gm}
\end{figure}

\end{document}